\documentclass[11pt]{article}
\usepackage[T1]{fontenc}

\usepackage{amsmath}
\usepackage{amsthm}
\usepackage{anyfontsize}
\usepackage{authblk}
\usepackage{bbm}
\usepackage{cite}
\usepackage{enumitem}
\usepackage[para]{footmisc}
\usepackage[margin=1in]{geometry}
\usepackage{graphicx}
\usepackage{hyperref}
\usepackage{mathtools}
\usepackage{physics}
\usepackage{thm-restate}
\usepackage{titling}
\usepackage{xparse}

\usepackage{pgfplots}
\pgfplotsset{compat=1.18}

\usepackage{tikz}
\usetikzlibrary{positioning}

\usepackage{xcolor}
\definecolor{linkcolor}{RGB}{0,15,175}
\definecolor{citecolor}{RGB}{150,0,0}
\definecolor{urlcolor}{RGB}{0,15,125}
\hypersetup{colorlinks,
	linkcolor=linkcolor,
	citecolor=citecolor,
	urlcolor=urlcolor
}

\usepackage{newpxtext}
\usepackage{newpxmath}
\usepackage{libertineMono}

\newcommand{\I}{\mathbbm{1}}

\newcommand{\N}{\mathbb{N}}
\newcommand{\bit}{\{0,1\}}
\newcommand{\cartprod}{\mathbin{\square}}
\newcommand{\gammacomm}{Gram matrix relation}
\newcommand{\ketbraij}{\ketbra*{\mspace{2mu}i}{j\mspace{1mu}}_{X'}}
\newcommand{\MAJ}{\mathsf{MAJ}}
\newcommand{\AND}{\mathsf{AND}}
\newcommand{\XOR}{\mathsf{XOR}}

\let\P\relax
\let\norm\relax
\DeclareMathOperator{\P}{P}

\DeclareMathOperator{\Binom}{Binomial}
\DeclareMathOperator{\greedy}{greedy}
\DeclareMathOperator{\glo}{global}
\DeclarePairedDelimiter{\norm}{\lVert}{\rVert}

\usepackage[capitalize]{cleveref}
\theoremstyle{plain}
\newtheorem{theorem}{Theorem}
\newtheorem{lemma}[theorem]{Lemma}
\theoremstyle{definition}
\newtheorem{definition}[theorem]{Definition}

\makeatletter
\long\def\@makefntext#1{\leavevmode\@makefnmark\nobreak#1}
\makeatother

\NewDocumentCommand{\multiemail}{ >{\SplitList{,}}m m }{%
    \newif\iffirstitem%
    \firstitemtrue%
    \NewDocumentCommand{\func}{ m }{%
        \iffirstitem%
            \firstitemfalse%
        \else
            ,%
        \fi%
        \href{mailto:##1@#2}{\color{black}##1}%
    }%
    \texttt{\{%
        \ProcessList{#1}{\func}%
        \firstitemtrue%
    \}@#2}}
\NewDocumentCommand{\email}{ m }{\texttt{\href{mailto:#1}{\color{black}#1}}}

\begin{document}

\setlength{\droptitle}{-1.6cm}
\title{Local strategies are pretty good at computing Boolean properties of quantum sequences}

\newcommand{\aff}[1]{$^\text{\ref{affil#1}}$}
\author{
    Tathagata~Gupta\aff{1}, \;
    Ankith~Mohan\aff{2}, \;
    Shayeef~Murshid\aff{3},\authorcr
    Vincent~Russo\aff{4}, \;
    Jamie~Sikora\aff{2}, \;
    Alice~Zheng\aff{2}
}

\makeatletter
\hypersetup{
    pdftitle={\@title},
    pdfauthor={
        Tathagata Gupta,
        Ankith Mohan,
        Shayeef Murshid,
        Vincent Russo,
        Jamie Sikora,
        Alice Zheng
    }
}
\makeatother

\newcommand{\afftarget}[2]{\footnotetext[#1]{\phantomsection\label{affil#1}#2}}
\afftarget{1}{
    Department of Physics, Indian Institute of Technology,
    Madras, Chennai 600036, India.
    \email{tathagatagupta@gmail.com}}
\afftarget{2}{
    Department of Computer Science, Virginia Tech,
    Blacksburg, VA 24061, USA.
    \multiemail{ankithmo,sikora,alicezheng}{vt.edu}}
\afftarget{3}{
    Electronics and Communication Sciences Unit, Indian Statistical Institute,
    Kolkata 700108, India.
    \email{shayeef\_r@isical.ac.in}}
\afftarget{4}{
    Unitary Foundation,
    San Francisco, California 94104, USA.
    \email{vincentrusso1@gmail.com}}

\date{March~6, 2026}
\maketitle

\begin{abstract}
    Quantum memory is a scarce and costly resource, yet little is known about which learning tasks remain feasible under severe memory constraints.
    We study the problem of computing global properties of quantum sequences when quantum systems must be measured individually, without storing or jointly processing them.
    In our setting, a bit string $x \in \bit^n$ is encoded into an $n$-qubit product state
    $\ket{\psi_{x_1}} \otimes \cdots \otimes \ket{\psi_{x_n}}$, and the goal is to infer $f(x) \in \bit$ from measurements of this quantum encoding.
    We consider a simple local strategy, which we call the \emph{greedy strategy}, that applies the same optimal single-system measurement independently to each subsystem and then infers $f(x)$ from the outcomes.
    Our main result gives a complete characterization of when the greedy strategy is optimal: it achieves the same maximum success probability as an unrestricted global measurement if and only if the target Boolean function is affine (in all but finitely many cases).
    We establish a universal performance guarantee for general Boolean functions, showing that the success probability of the greedy strategy is always at least the square of the optimal global success probability, in direct analogy with the Barnum-Knill bound for the pretty good measurement.
    These results demonstrate that even under extreme memory constraints, simple local measurement strategies can remain provably competitive for learning global properties of quantum sequences.
\end{abstract}

\section{Introduction}
Suppose that a learner knows two quantum states $\ket{\psi}$ and $\ket{\phi}$, and receives a sequence of the form
\begin{equation}
    \ket{\psi} \otimes \ket{\phi} \otimes \ket{\psi} \otimes \cdots \otimes \ket{\psi}.
\end{equation}
While each position is known to be either $\ket{\psi}$ or $\ket{\phi}$, the identity of each state is unknown.
Can one determine whether the number of $\ket{\psi}$ states exceeds the number of $\ket{\phi}$ states, or whether the total number of $\ket{\psi}$ states is even?
These examples highlight a basic task: learning coarse-grained properties of a quantum sequence without necessarily determining the sequence itself.

We study this problem in two contrasting resource regimes.
In the \emph{full-memory} setting the entire (finite) sequence may be stored so that an optimal joint measurement can be performed upon receiving all of the states.
In the \emph{memoryless} setting, neither quantum nor classical memory is available and each state must be measured immediately, using only predetermined local measurements.
This work investigates which binary properties (properties which can be either true or false) can still be learned optimally under such stringent memory constraints.

Quantum state discrimination (see~\cite{chefles1998quantum, barnett2009quantum,bae2015quantum,bergou2004discrimination} for reviews) provides the broader context for this problem.
Here, a learner is presented with a quantum state drawn from a known ensemble, i.e., a collection of quantum states together with their \emph{a priori} probabilities, and the goal is to infer the identity of a given state.
This is a fundamental task in quantum communication and cryptography, and the impossibility of perfect discrimination among non-orthogonal quantum states makes this a highly nontrivial problem \cite{nielsen2010quantum}.
This has fueled nearly five decades of work \cite{helstrom1969quantum}, producing extensive catalogs of state ensembles and their optimal discrimination probabilities \cite{sun2001optimum,sugimoto2010complete,herzog2004minimum,herzog2005optimum,herzog2015optimal}.

Two paradigms have received particular attention: minimum-error discrimination \cite{bae2013structure}, where the learner outputs a guess with the goal of minimizing the overall probability of error, and unambiguous discrimination \cite{ivanovic1987differentiate, dieks1988overlap, peres1988differentiate}, which forbids incorrect guesses altogether at the cost of allowing inconclusive outcomes.
Despite decades of study, closed-form solutions are known only for some particular ensembles in both paradigms \cite{jaeger1995optimal,barnett2009quantum}.
Nevertheless, the optimal discrimination probability for an arbitrary ensemble can be efficiently obtained via a semidefinite program \cite{eldar2003designing,eldar2003semidefinite}.

Many information-processing and quantum-computing protocols involve discriminating sequences of quantum states.
The well-known BB84 quantum key distribution scheme is an example where Alice sends Bob qubits chosen at random from the set $\{\ket{0}, \ket{1}, \ket{+}, \ket{-}\}$ \cite{bennett2014quantum}.
From Bob's perspective, the received signal is a quantum sequence whose individual components are drawn from this set.
The central question in such scenarios concerns the resources required to optimally discriminate the given sequence, which might vary significantly across different sequence types.
For example, when discriminating two pure states given a finite number of copies, adaptive local strategies that make limited use of classical memory to condition later measurements based on earlier outcomes achieve the same performance as optimal joint measurements, whereas memoryless strategies generally do so only asymptotically \cite{acin2005multiple}.
However, for two mixed states, joint measurements can strictly outperform all local strategies even in the asymptotic limit \cite{calsamiglia2010local}.
In the quantum change-point detection problem, joint measurements provide an advantage in the minimum-error paradigm \cite{sentis2016quantum}, while for unambiguous detection the optimality of local versus global strategies depends on the overlap between the two states \cite{sentis2017exact}.
For identification of symmetric, pure states, online strategies have been proposed that often achieve optimality using classical memory to store only the last obtained measurement outcome \cite{sentis2022online}.

More recently, for problems where the goal is to identify the entire sequence composed of independently prepared states, it has been shown that memoryless strategies are optimal in the sense that they achieve the same performance as joint measurements, for both minimum-error and unambiguous discrimination \cite{gupta2024unambiguous,gupta2024optimal}.
This observation naturally raises the question of whether such equivalence persists when the discrimination task seeks to learn only specific properties or coarse-grained features of the sequence, rather than its exact identity.

Motivated by this, we initiate a study of learning Boolean properties of quantum sequences.
In a classical information-processing/computation setup, data are represented by bit strings, and computing Boolean functions of these strings, such as the $\XOR$, is elementary; the main challenges lie instead in algorithmic efficiency and robustness of protocols.
In the quantum setting, by contrast, computing even a simple function becomes non-trivial when no classical description of the data is available.
We consider which properties can be optimally learned using memoryless measurement strategies, and which fundamentally require joint measurements across the sequence.
The paradigm we focus on is minimum-error, where the learner might obtain an incorrect value for the property of the sequence and their goal is to minimize this error, and we completely characterize the Boolean functions for which fixed local measurements are optimal.

\section{Problem formulation and main results}\label{sec:main}
In this section we formulate our problem as a task of quantum state discrimination and present the main results.
The section ends with a short guide to the structure of the paper.

\subsection{Problem formulation}

In classical computing, data are modeled by bit strings.
In our setting, we consider encoding the classical symbols 0 and 1 into two arbitrary, but fixed and known quantum states $\ket{\psi_0}$ and $\ket{\psi_1}$.
This induces a map from an $n$-bit string $x \in \bit^n$ to a product quantum state of length $n$,
\begin{equation}
    x = x_1 x_2 \ldots x_n \in \bit^n \longmapsto  \ket{\psi_x} \coloneqq \ket{\psi_{x_1}} \otimes \ket{\psi_{x_2}} \otimes \cdots \otimes \ket{\psi_{x_n}}.
\end{equation}
Any Boolean property of the resulting quantum sequence can be described by a Boolean function $f:\bit^n \to \bit$.
Learning this property, equivalently, determining the value of $f(x)$ does not require full identification of the underlying string $x$.
Rather, it suffices to discriminate between the two sets of quantum sequences corresponding to inputs for which $f(x)=0$ and for which $f(x)=1$.
Thus, given an unknown quantum sequence $\ket{\psi_{x}}$, the task is to decide whether the classical label $x$ belongs to $S_0$ or $S_1$, where $S_0 = f^{-1}(0)$ and $S_1 = f^{-1}(1)$ are the pre-images of 0 and 1 respectively.
Without loss of generality we can consider $\ket{\psi_0}$ and $\ket{\psi_1}$ to be qubit states with a nonnegative inner product \cite[Section~3.1]{barnett2009quantum}.
Furthermore, since orthogonal and identical states represent trivial discrimination cases, we restrict the inner product to the open interval $\braket{\psi_0}{\psi_1} = s \in (0,1)$.

To learn the property\footnote{ Henceforth, we use the terms ``property'' and ``function'' interchangeably.
} $f$, the learner must discriminate between two mixed quantum states $\sigma_0$ and $\sigma_1$ where $\sigma_i$ is the density matrix obtained by mixing the states $\left\{\ketbra{\psi_x}{\psi_x} : f(x) = i\right\}$.
Assuming a uniform prior over the input strings in $\bit^n$, these mixed states can be written as
\begin{equation}\label{eq:states}
    \sigma_0 = \frac{1}{\abs{S_0}} \sum_{x \in S_0} \ketbra{\psi_x}{\psi_x} \quad \text{and} \quad \sigma_1 = \frac{1}{\abs{S_1}} \sum_{x \in S_1} \ketbra{\psi_x}{\psi_x},
\end{equation}
which occur with prior probabilities $p_0 = \abs{S_0}/2^n$ and $p_1 = \abs{S_1}/2^n$.
We refer to the resulting ensemble
\begin{equation}\label{eq:boolean-ensemble}
    \mathcal{E} = \{(p_0,\sigma_0),(p_1,\sigma_1)\}
\end{equation}
as the Boolean ensemble.
In this work, we consider the \emph{minimum-error} evaluation of the function $f$, where the learner measures the unknown quantum state $\ket{\psi_x}$ to infer the value of $f(x)$, seeking to minimize the probability of error.
Operationally, this corresponds to discriminating between $\sigma_0$ and $\sigma_1$ using a two-outcome POVM $\mathcal{M} = \{ M_0, M_1 \}$, where the outcome $i$ is interpreted as the guess $f(x)=i$.
Conditioned on the state being $\sigma_i$, the probability of a correct guess is $\Tr(\sigma_i M_i)$.
The optimal success probability is obtained by maximizing the average probability of a correct decision over all such POVMs:
\begin{equation}\label{SDP:qsd}
    \max \left\{
        p_0 \Tr(\sigma_0 M_0) + p_1 \Tr(\sigma_1 M_1) : M_0 + M_1 = \I, M_0, M_1 \geq 0
    \right\}.
\end{equation}
This is a semidefinite program and generalizes to discrimination among any finite ensemble of quantum states.
Since we are only dealing with two states here, the optimal minimum-error success probability is given by the Helstrom bound \cite{helstrom1969quantum,holevo1976investigations}
\begin{equation}\label{eq:tracenorm}
    \frac{1}{2} + \frac{1}{2} \norm*{ p_0 \sigma_0 - p_1 \sigma_1 }_{1},
\end{equation}
where $\norm{X}_{1} = \Tr\bigl(\sqrt{X^*X}\bigr)$ is the trace norm.

An optimal POVM achieving the maximum success probability may require a joint measurement acting on the entire $n$-qubit state $\ket{\psi_x}$, which is technologically challenging \cite{lvovsky2009optical,ma2021one,conlon2023approaching}.
A more feasible alternative is a local measurement strategy, in which each state is measured individually.
Such a strategy effectively extracts classical information from each subsystem and uses it to infer the value of $f(x)$.
In this work, we focus on a simple procedure: a fixed local strategy that applies a predetermined measurement to each subsystem, independent of all of the previous outcomes, thus requiring neither classical nor quantum memory.
A natural local measurement to consider is the one that aims to optimally distinguish between $\ket{\psi_0}$ and $\ket{\psi_1}$ at each position.
Given any Boolean function $f$, this strategy learns each bit optimally, constructs a guess for the encoded string $y$, and evaluates $f(y)$.
This can also be seen as a \emph{non-adaptive greedy} strategy\footnote{
    Note, however, that this notion of greediness may not be optimal in the class of local memoryless strategies for computing the overall function value.
}.
One might be tempted to think that this na{\"\i}ve and simple strategy would perform rather poorly when compared with a global measurement.
Our first result shows that, remarkably, this strategy performs rather well---but first we establish some notation.
\begin{definition}
    Given a positive integer $n$, a function $f : \bit^n \to \bit$, and two encoding qubit states $\ket{\psi_0}$ and $\ket{\psi_1}$ with inner product $s = \braket{\psi_0}{\psi_1}$, the success probability of learning $f$ using a strategy ``strat'' when each $x \in \bit^n$ is chosen uniformly at random is denoted by $\P(n,f,s,\mathrm{strat})$.
\end{definition}
Throughout the paper, we denote the fixed local (greedy) strategy and a globally optimal strategy by ``$\greedy$'' and ``$\glo$'', respectively.

\subsection{Main results}

We now state our first main result which establishes a universal performance guarantee for the greedy strategy.
\begin{theorem}\label{thm:PGM}
    For all Boolean functions $f$, positive integers $n$, and $s \in (0,1)$, the performance of the greedy strategy can be bounded below by the following relation
    \begin{equation}
        \P(n,f,s,\glo)^2 \leq \P(n,f,s,\greedy).
    \end{equation}
\end{theorem}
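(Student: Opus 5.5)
The plan is to identify the greedy measurement with the \emph{pretty good measurement} (PGM) of the Boolean ensemble $\mathcal{E}$ in~\eqref{eq:boolean-ensemble}, and then invoke the Barnum--Knill theorem. That theorem states that for any ensemble $\{(q_i,\rho_i)\}$, the PGM $\{\rho^{-1/2} q_i \rho_i \rho^{-1/2}\}$ with $\rho=\sum_i q_i\rho_i$ succeeds with probability at least the square of the optimal success probability. Once we show $\P(n,f,s,\greedy)$ is exactly the PGM success probability for $\mathcal{E}$, the inequality $\P(n,f,s,\glo)^2\le \P(n,f,s,\greedy)$ follows at once.

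\textbf{Step 1: the greedy measurement is a coarse-graining of a product measurement.} First write the single-qubit optimal measurement for $\ket{\psi_0},\ket{\psi_1}$ with equal priors. It is the symmetric orthonormal basis $\{\ket{e_0},\ket{e_1}\}$ of the span of the two states, placed symmetrically around them. The greedy strategy measures every qubit in this basis, obtains $y\in\bit^n$ and outputs $f(y)$. Its POVM is therefore
\begin{equation}
    M_i=\sum_{y\in S_i}\ketbra{e_y}{e_y}, \qquad \ket{e_y}=\ket{e_{y_1}}\otimes\cdots\otimes\ket{e_{y_n}}.
\end{equation}

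\textbf{Step 2: the product basis is the PGM for the full $2^n$-string ensemble.} For a single qubit, I would check that the PGM of $\{(1/2,\ketbra{\psi_0}{\psi_0}),(1/2,\ketbra{\psi_1}{\psi_1})\}$ is exactly $\{\ketbra{e_0}{e_0},\ketbra{e_1}{e_1}\}$.
\begin{itemize}
    \item Let $\rho_1=\tfrac12(\ketbra{\psi_0}{\psi_0}+\ketbra{\psi_1}{\psi_1})$. It is invertible on the two-dimensional span because $s<1$.
    \item The vectors $\rho_1^{-1/2}\ket{\psi_j}/\sqrt2$ are orthonormal and symmetric under the reflection swapping $\ket{\psi_0}$ and $\ket{\psi_1}$.
    \item Hence they coincide with $\ket{e_j}$. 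This is the standard fact that the square-root measurement of two equiprobable pure states is the Helstrom measurement, and it follows from a direct $2\times 2$ computation in the basis where $\rho_1$ is diagonal.
\end{itemize}
For the ensemble $\{(2^{-n},\ketbra{\psi_x}{\psi_x})\}_{x\in\bit^n}$, the average state factorizes as $\rho=\rho_1^{\otimes n}$. Hence $\rho^{-1/2}=(\rho_1^{-1/2})^{\otimes n}$, and the fine-grained PGM element for $x$ is $\bigotimes_k \rho_1^{-1/2}\tfrac12\ketbra{\psi_{x_k}}{\psi_{x_k}}\rho_1^{-1/2}=\ketbra{e_x}{e_x}$.

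\textbf{Step 3: coarse-graining the PGM gives the PGM of $\mathcal{E}$.} The Boolean ensemble has $p_i\sigma_i=\sum_{x\in S_i}2^{-n}\ketbra{\psi_x}{\psi_x}$. Its average state is the same $\rho$ as in Step 2. By linearity,
\begin{equation}
    \rho^{-1/2}p_i\sigma_i\rho^{-1/2}=\sum_{x\in S_i}\ketbra{e_x}{e_x}=M_i .
\end{equation}
So the greedy POVM is precisely the PGM of $\mathcal{E}$. Applying Barnum--Knill to $\mathcal{E}$ then yields the theorem.

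The only step requiring real care is Step 2: verifying that the single-qubit square-root measurement equals the optimal (Helstrom) single-qubit measurement, including the support and invertibility issue. I would handle this explicitly by writing $\ket{\psi_{0,1}}=\cos\theta\ket{0}\pm\sin\theta\ket{1}$ with $s=\cos 2\theta$, so that $\rho_1=\mathrm{diag}(\cos^2\theta,\sin^2\theta)$ and $\rho_1^{-1/2}\ket{\psi_{0,1}}/\sqrt2=(\ket0\pm\ket1)/\sqrt2$. Everything else is linearity together with the tensor-product structure of $\rho$.
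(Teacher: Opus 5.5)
Your proposal is correct and follows essentially the same route as the paper: the paper also uses the factorization of the average state as $\rho^{\otimes n}$ to show that the PGM elements of the Boolean ensemble decompose as $\sum_{y\in S_i} M_{y_1}\otimes\cdots\otimes M_{y_n}$, which is exactly the greedy POVM, and then applies the Barnum--Knill inequality. Your explicit $2\times 2$ check that the single-qubit square-root measurement coincides with the Helstrom measurement fills in a step the paper handles by citing Hausladen--Wootters.
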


The proof consists in showing that the greedy strategy is implemented by the pretty good measurement (PGM) \cite{hausladen1994pretty} for discriminating the
Boolean ensemble $\mathcal{E} = \{(p_0, \sigma_0), (p_1, \sigma_1)\}$.
The bound then follows from Barnum and Knill's theorem~\cite{barnum2002reversing,watrous2018theory}.
This is presented in \cref{subsec:greedy=pgm}.

This raises a natural question: for which Boolean functions does the greedy strategy actually achieve the global optimum?
The next result identifies a class of functions that satisfy this.
\begin{restatable}[Sufficiency]{theorem}{sufficient}\label{thm:sufficient}
    Let $f : \bit^n \to \bit$ be of the form
    \begin{equation}\label{eq:affine}
        f(x_1,\ldots,x_n) = b_0 \oplus b_1 x_1 \oplus b_2 x_2 \oplus \cdots \oplus b_n x_n,
    \end{equation}
    for some $b_0, b_1, \ldots, b_n \in \bit$.
    Then, the greedy strategy is globally optimal, that is
    \begin{equation}
        \P(n,f,s,\greedy)=\P(n,f,s,\glo)
    \end{equation}
    for all $s \in (0,1)$.
\end{restatable}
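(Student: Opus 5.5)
First reduce to the parity function. Flipping the output (via $b_0$) just relabels the two outcomes, and coordinates with $b_i = 0$ are irrelevant: the states at those positions are independent of $f(x)$ and can be traced out. Therefore it suffices to treat $f(x) = x_1 \oplus \cdots \oplus x_k$ on $k$ qubits, with uniform prior. Here $p_0 = p_1 = 1/2$, and the task is to compute the Helstrom quantity $\tfrac12 + \tfrac12\norm{\tfrac12(\sigma_0 - \sigma_1)}_1$ and compare it with the greedy success probability.

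The key observation is that the difference operator factorizes. Writing $\rho_b = \ketbra{\psi_b}{\psi_b}$, one has
\begin{equation}
    \sum_{x} (-1)^{x_1\oplus\cdots\oplus x_k} \ketbra{\psi_x}{\psi_x} = (\rho_0 - \rho_1)^{\otimes k},
\end{equation}
and hence $p_0\sigma_0 - p_1\sigma_1 = 2^{-k}(\rho_0-\rho_1)^{\otimes k}$. The trace norm is multiplicative under tensor products, so
\begin{equation}
    \norm{p_0\sigma_0 - p_1\sigma_1}_1 = \Bigl(\tfrac12\norm{\rho_0-\rho_1}_1\Bigr)^{k} = (1-s^2)^{k/2},
\end{equation}
using $\norm{\rho_0-\rho_1}_1 = 2\sqrt{1-s^2}$ for pure states. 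This gives $\P(k,f,s,\glo) = \tfrac12 + \tfrac12 (1-s^2)^{k/2}$.

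For the greedy strategy, each qubit is measured with the single-copy Helstrom measurement. That measurement errs with probability $q = \tfrac12(1-\sqrt{1-s^2})$, independently across positions and symmetrically for both inputs. The guessed parity is correct exactly when an even number of errors occur, so the success probability is $\tfrac12 + \tfrac12(1-2q)^k = \tfrac12 + \tfrac12(1-s^2)^{k/2}$, which matches the global value.

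Alternatively, one can see optimality directly: the Helstrom projector onto the positive part of $(\rho_0-\rho_1)^{\otimes k}$ is a sum of tensor products of the single-qubit eigenprojectors with an even number of negative factors. This is precisely the greedy measurement followed by classical parity. The only mildly delicate steps are the sign bookkeeping in the factorization identity and the justification for discarding coordinates with $b_i = 0$. For the latter, note that the marginal on those qubits is the same product state $\bigl(\tfrac12(\rho_0+\rho_1)\bigr)^{\otimes}$ in both hypotheses, tensored with the relevant part, so the trace norm is unchanged. On the greedy side, those bits do not affect the computed value.
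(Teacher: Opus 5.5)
Your proof is correct and follows essentially the same route as the paper. Both arguments factor the signed sum $\sum_x (-1)^{f(x)}\ketbra{\psi_x}{\psi_x}$ into a tensor product, use multiplicativity of the trace norm, and match the result against the even-number-of-errors calculation for the greedy strategy; the paper keeps the $b_i=0$ coordinates as factors $\rho_0+\rho_1$ of trace norm $2$ instead of tracing them out, which is equivalent.

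The only loose end is the constant case $b_1=\cdots=b_n=0$: there your claim $p_0=p_1=1/2$ fails, but both strategies trivially succeed with probability $1$, which the paper handles as a separate case.
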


The functions of the form in Eq.~\eqref{eq:affine} are called \emph{affine Boolean functions}.
For affine Boolean functions, both the globally optimal success probability (via the Helstrom bound) and the greedy success probability can be computed in closed form.
A direct calculation using properties of the trace norm shows that they are equal for all $s \in (0,1)$.
See \cref{subsec:greedy=opt} for details.

Our final result shows that this class is essentially the only one for which the greedy strategy is optimal, establishing a converse to \cref{thm:sufficient}.

\begin{restatable}[Necessity]{theorem}{necessity}\label{thm:necessary}
    Suppose a Boolean function $f : \bit^n \to \bit$ is \emph{not} affine, meaning it is \emph{not} of the form in \cref{eq:affine}.
    Then, the greedy strategy is not globally optimal, that is
    \begin{equation}
        \P(n,f,s,\greedy) < \P(n,f,s,\glo)
    \end{equation}
    for all but possibly finitely many values of $s \in (0,1)$.
\end{restatable}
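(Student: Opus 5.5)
The plan is to use the standard optimality condition for two-outcome discrimination, which turns optimality of the greedy strategy into a commutation condition. We then show that this condition forces $f$ to be affine unless $s$ lies in a finite set.

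\textbf{Setup.} Let $F(x) = (-1)^{f(x)}$ and $A = 2^{-n}\sum_x F(x)\ketbra{\psi_x}{\psi_x} = p_0\sigma_0 - p_1\sigma_1$.

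The greedy strategy measures each qubit in the local Helstrom basis $\{\ket{e_0},\ket{e_1}\}$. In that basis, by the symmetry $\braket{\psi_0}{\psi_1}=s$, we may write $\ket{\psi_0} = a\ket{e_0}+b\ket{e_1}$ and $\ket{\psi_1} = b\ket{e_0}+a\ket{e_1}$. Here $a,b>0$ with $a^2+b^2=1$ and $2ab=s$, so $a^2,b^2 = (1\pm\sqrt{1-s^2})/2$ are algebraic functions of $s$.

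The greedy POVM then has observable $Z = M_0 - M_1 = \sum_y F(y)\ketbra{e_y}{e_y}$, a Hermitian unitary. Its success probability is $\tfrac12 + \tfrac12\Tr(AZ)$, while the global optimum is $\tfrac12+\tfrac12\norm{A}_1$ by Eq.~\eqref{eq:tracenorm}.

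\textbf{Step 1: reduce optimality to commutation.} By the equality case of Hölder's inequality, $\Tr(AZ)=\norm{A}_1$ with $\norm{Z}_\infty\le 1$ forces $Z$ to act as the sign of $A$ on its support. Hence $AZ = ZA = |A|$, and in particular $[A,Z]=0$.

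So greedy optimality at a given $s$ implies $\bra{e_y}A\ket{e_z}=0$ whenever $F(y)\neq F(z)$. It therefore suffices to find, for non-affine $f$, one such pair $(y,z)$ whose matrix entry is not identically zero as a function of $s$.

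That entry is an algebraic (indeed polynomial in $a,b$) function of $s$. Hence it is either identically zero on $(0,1)$ or has only finitely many zeros there, which gives the ``all but finitely many $s$'' clause.

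\textbf{Step 2: compute a single-flip entry.} Since $f$ is not constant (constants are affine), there exist a coordinate $j$ and a string $y$ with $F(y)\ne F(y^{\oplus j})$. Take $z = y^{\oplus j}$.

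Expanding, the factor contributed by coordinate $j$ is $ab$ regardless of $x_j$. Each coordinate $i\neq j$ contributes $a^2$ if $x_i=y_i$ and $b^2$ otherwise. Thus
\begin{equation*}
\bra{e_y}A\ket{e_z} = ab\,2^{-n}\sum_{x_{-j}} g_j(x_{-j})\prod_{i\neq j} w(x_i,y_i), \qquad g_j(x_{-j}) = F(x_{-j},0)+F(x_{-j},1).
\end{equation*}
Up to the factor $a^{2(n-1)}$, this is a polynomial in $r=b^2/a^2$, which ranges over an interval as $s$ varies. Its coefficient of $r^k$ is the sum of $g_j$ over the Hamming sphere of radius $k$ around $y_{-j}$.

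I expect this to be the main obstacle: showing that if the entry vanishes identically (for all such $y$), then $g_j\equiv 0$. Vanishing of these sphere sums for a single $y$ is not enough on its own, so the plan is to use all sensitive pairs in direction $j$ simultaneously. If that is insufficient, I would also use entries for pairs $y,z$ at larger Hamming distance, whose vanishing gives further linear constraints on $F$. The goal is to conclude, via a Fourier (noise-operator) injectivity argument, that all Fourier coefficients of $g_j$ vanish. Equivalently, $f(x) = x_j\oplus h(x_{-j})$ for some Boolean function $h$.

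\textbf{Step 3: induct on $n$.} If greedy optimality holds for infinitely many $s$, Step 2 shows that every coordinate $j$ on which $f$ depends enters as $f = x_j\oplus h$. Peeling off these coordinates one at a time, applying the same reasoning to $h$ on fewer variables, leaves a constant. Hence $f$ is affine.

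By contraposition, a non-affine $f$ has a nonzero off-diagonal entry of $[A,Z]$ for all but finitely many $s$. For those $s$, Step 1 gives $\P(n,f,s,\greedy)<\P(n,f,s,\glo)$.
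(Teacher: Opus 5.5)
Your Step~1 is sound, and it is a simpler route than the paper's. The equality case of H\"older gives $ZA=AZ=\lvert A\rvert$. Writing each single-site factor as $a^2\,t^{[x_l\neq y_l]+[x_l\neq z_l]}$ with $t=b/a$ shows that your entry equals $2^{-n}a^{2n}\sum_x F(x)\,t^{d(x,y)+d(x,z)}$. So the vanishing of all split-pair entries is exactly the relation $\Gamma_0\Gamma'=\Gamma'\Gamma_1$ of \cref{thm:gamma-comm}, with $s$ replaced by $t$, and you reach it without the Iten--Renes--Sutter criterion.

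The gap is Step~2. All of the combinatorial content lies there, and you leave it as a plan. Your primary plan cannot work. Take $f=\MAJ$ on three bits: $g_3$ takes the values $2,0,0,-2$ on $00,01,10,11$, and the $3$-sensitive points are $01$ and $10$. Around either point the sphere sums are $0$, then $g_3(00)+g_3(11)=0$, then $0$. So every single-flip entry vanishes identically in $s$ (for every $j$, by symmetry), although $g_j\not\equiv 0$. A noise-operator injectivity argument does not rescue this, because the constraints are evaluations at split points only, not identities between functions.

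The missing idea is to use pairs at distance two and look only at the lowest-order coefficient. Take $z=y^{(i,k)}$ with $F(y)\neq F(z)$. The minimum of $d(x,y)+d(x,z)$ is $2$, attained exactly on the square face $\{y,y^{(i)},y^{(k)},z\}$. So the lowest-order coefficient of the entry is $F(y)+F(y^{(i)})+F(y^{(k)})+F(z)$. If greedy is optimal for infinitely many $s$, this coefficient must vanish. Hence on every square face of the cube, one diagonal is split if and only if the other is.

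Now suppose the edge $(x,x^{(i)})$ is split. Checking the two cases $F(x^{(i,k)})=\pm F(x)$ shows that $(x^{(k)},x^{(i,k)})$ is split for every $k\neq i$. So splitting in direction $i$ propagates to all of $\bit^n$. This is precisely the $L=2$ four-cycle argument in the paper's \cref{lem:hypercube-flip}.

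With this in hand you do not need the induction of Step~3. Every coordinate is either everywhere sensitive or irrelevant, so $f(x^{(j)})=f(x)\oplus b_j$ for all $x$ and $j$, and $f$ is affine. As written, ``applying the same reasoning to $h$'' would also need an argument that $h$ inherits the commutation condition, as in \cref{lem:reduced-optimality}.
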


Together, \cref{thm:sufficient,thm:necessary} provide a complete characterization: affine Boolean functions are precisely those for which memoryless measurement strategies suffice for optimal learning.
Note that \cref{thm:necessary} holds for all $s \in (0,1)$ except possibly for a finite set, hence for all but a set of measure zero.
The proof is presented in \cref{sec:necessary}.
This is followed by conclusions in \cref{sec:discussion}.

\Cref{app:boolean-grams,app:hypercube} contain supplementary results for the proof of \cref{thm:necessary}.
\Cref{app:non-affine-examples} briefly discusses the performance of the greedy strategy for some non-affine functions.
We consider the $\AND$ and $\MAJ$ (majority) functions as representative non-affine examples.
The former is highly unbalanced ($\abs{S_0}=2^n-1, \abs{S_1}=1)$, while the latter is balanced ($\abs{S_0}=\abs{S_1})$.
In both cases, globally optimal strategies outperform the greedy strategy, as illustrated by numerical results for small sequence lengths ($3\le n\le 9$).
Their asymptotic behavior, however, differs: for $\AND$, both the greedy and global strategies achieve perfect success as $n\to\infty$, whereas for $\MAJ$, the greedy strategy remains bounded away from unity in this limit.

\section{Proofs of Theorems~\ref{thm:PGM} and~\ref{thm:sufficient}: Performance of the greedy strategy}\label{sec:easy-proofs}
The \emph{pretty good measurement} (PGM), introduced by Hausladen and Wootters~\cite{hausladen1994pretty} and further analyzed by Barnum and Knill~\cite{barnum2002reversing}, provides a measurement strategy for quantum state discrimination.
Given an ensemble $\{(p_i, \rho_i)\}_{i=1}^n$, the PGM is defined by the measurement operators
\begin{equation}\label{eq:pgm-operators}
    M_i^{\text{PGM}} = \rho_{\text{avg}}^{-1/2} (p_i \rho_i) \rho_{\text{avg}}^{-1/2},
\end{equation}
where $\rho_{\text{avg}} = \sum_{i=1}^n p_i \rho_i$ and $\rho_{\text{avg}}^{-1/2}$ denotes its Moore-Penrose pseudo-inverse.
These operators form a valid POVM.
Barnum and Knill~\cite{barnum2002reversing} gave the following bound on the performance of the PGM.

\begin{theorem}[Barnum-Knill Inequality \cite{barnum2002reversing}]\label{thm:barnum-knill}
    Let $\P_{\text{opt}}$ denote the optimal success probability and $\P_{\text{PGM}}$ denote the success probability achieved by the pretty good measurement for discriminating an ensemble $\{(p_i, \rho_i)\}_{i=1}^n$.
    Then,
    \begin{equation}
        \P_{\text{opt}}^2 \leq \P_{\text{PGM}} \leq \P_{\text{opt}}.
    \end{equation}
\end{theorem}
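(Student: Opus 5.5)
The upper bound $\P_{\text{PGM}} \leq \P_{\text{opt}}$ is immediate, since the PGM operators in \cref{eq:pgm-operators} form a valid POVM (once completed on the kernel of $\rho_{\text{avg}}$) and $\P_{\text{opt}}$ is the maximum over all POVMs. The real content is the lower bound. I would prove it with a single Cauchy--Schwarz inequality in the Hilbert--Schmidt inner product, after splitting $\rho \coloneqq \rho_{\text{avg}}$ symmetrically between an optimal measurement and the ensemble states.

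First I would handle supports. Every $p_i\rho_i \leq \rho$, so each $\rho_i$ is supported on $\operatorname{supp}(\rho)$. Let $\Pi$ be the projector onto this support and $\{M_i\}$ an optimal POVM. Then $\{\Pi M_i \Pi\}$ is a POVM on that subspace with the same success probability, so I may assume $\rho$ is invertible and $\rho^{-1/4}$ is an honest inverse.

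Next I would insert $\rho^{1/4}\rho^{-1/4}$ on both sides. This gives
\begin{equation}
    \P_{\text{opt}} = \sum_i \Tr\bigl( (\rho^{1/4} M_i \rho^{1/4}) (\rho^{-1/4} p_i\rho_i \rho^{-1/4}) \bigr).
\end{equation}
Applying Cauchy--Schwarz to the inner product $\sum_i \Tr(A_i^* B_i)$ on tuples of operators, with $A_i = \rho^{1/4} M_i \rho^{1/4}$ and $B_i = \rho^{-1/4} p_i\rho_i \rho^{-1/4}$ (both Hermitian), yields
\begin{equation}
    \P_{\text{opt}}^2 \leq \Bigl( \sum_i \Tr\bigl(\rho^{1/2} M_i \rho^{1/2} M_i\bigr) \Bigr) \Bigl( \sum_i \Tr\bigl(\rho^{-1/2} p_i\rho_i \rho^{-1/2}\, p_i\rho_i\bigr) \Bigr).
\end{equation}
By cyclicity, the second factor is exactly $\sum_i p_i \Tr(\rho_i M_i^{\text{PGM}}) = \P_{\text{PGM}}$.

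It then remains to show that the first factor is at most $1$; this is the step needing care. I would write
\begin{equation}
    \Tr(\rho^{1/2} M_i \rho^{1/2} M_i) = \langle M_i\rho^{1/2}, \rho^{1/2} M_i \rangle_{\mathrm{HS}}
\end{equation}
and apply Cauchy--Schwarz again. Since $\norm{\rho^{1/2} M_i}_2 = \norm{M_i\rho^{1/2}}_2$, this bounds the term by $\norm{M_i\rho^{1/2}}_2^2 = \Tr(\rho M_i^2)$. Because $0 \leq M_i \leq \I$, we have $M_i^2 \leq M_i$, so $\Tr(\rho M_i^2) \leq \Tr(\rho M_i)$. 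Summing over $i$ gives $\Tr(\rho \sum_i M_i) = \Tr \rho = 1$. Combining the two factors yields $\P_{\text{opt}}^2 \leq \P_{\text{PGM}}$.

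The only real obstacle is choosing the symmetric split $\rho^{1/4}\cdots\rho^{1/4}$ and $\rho^{-1/4}\cdots\rho^{-1/4}$ so that one factor reproduces the PGM success probability exactly while the other is controlled using only $M_i \leq \I$. The support reduction is routine bookkeeping.
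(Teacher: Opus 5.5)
Your proof is correct. The paper does not prove this inequality itself but imports it from \cite{barnum2002reversing,watrous2018theory}, and your argument is essentially the standard proof given there: split $\rho_{\text{avg}}$ symmetrically as $\rho^{1/4}M_i\rho^{1/4}$ paired against $\rho^{-1/4}p_i\rho_i\rho^{-1/4}$, apply Cauchy--Schwarz, and recognize the second factor as $\P_{\text{PGM}}$. The only cosmetic difference is in bounding the first factor, where you could skip the second Cauchy--Schwarz by noting $\Tr(XM_i)\le\Tr X$ for $X=\rho^{1/2}M_i\rho^{1/2}\ge 0$ and $M_i\le\I$.
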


This bound ensures that the PGM always achieves a success probability at least as large as the square of the optimal probability, guaranteeing near-optimal performance.
Moreover, in many cases of interest, the PGM is exactly optimal \cite{mochon2006family,dalla2015optimality,eldar2001quantum}.

\subsection{The greedy strategy}

Let $\mathcal{M} = \{M_0, M_1\}$ be an optimal POVM for discriminating the ensemble $\left\{ \left( 1/2, \ket{\psi_0} \right), \left( 1/2, \ket{\psi_1} \right) \right\}$ on a single qubit.
Since $\ket{\psi_0}$ and $\ket{\psi_1}$ appear with equal probabilities, the PGM is optimal for this single-qubit discrimination problem~\cite{hausladen1994pretty}, with
\begin{equation}
    M_i = \frac{1}{2} \rho^{-1/2} \ketbra{\psi_i} \rho^{-1/2},
\end{equation}
for $i \in \bit$ where $\rho = \frac{1}{2}\ketbra{\psi_0} + \frac{1}{2}\ketbra{\psi_1}$ is the average single-qubit state.

The greedy strategy is as follows: measure each qubit of the sequence $\ket{\psi_x}$ independently via the POVM $\mathcal{M}$, obtaining outcome $y_i \in \bit$ for the $i$-th qubit, yielding a classical string $y = y_1 \cdots y_n$.
Then, evaluate $f(y)$: if $y \in S_0$, guess that the input was from $S_0$; if $y \in S_1$, guess $S_1$.

The success probability of this strategy can be computed as follows.
Given that the true input is $x$, the probability of obtaining outcome string $y$ through local measurements is
\begin{equation}
    \P(y \mid x) = \Tr\left(M_y \ketbra{\psi_x}\right) = \prod_{i=1}^n \Tr(M_{y_i} \ketbra{\psi_{x_i}}),
\end{equation}
where $M_y = \bigotimes_{i=1}^n M_{y_i}$.
We succeed if $x$ and $y$ belong to the same set ($S_0$ or $S_1$).
Averaging over the uniform distribution on inputs, the greedy success probability is
\begin{equation}\label{eq:local_success}
    \P(n,f,s,\greedy)
    = \sum_{x \in \bit^n} \frac{1}{2^n} \sum_{\substack{y \in \bit^n \\ f(y)\mspace{2mu} = f(x)}} \P(y \mid x)
    = \frac{\abs{S_0}}{2^n} \sum_{y \in S_0} \Tr(M_y \sigma_0) + \frac{\abs{S_1}}{2^n} \sum_{y \in S_1} \Tr(M_y \sigma_1).
\end{equation}

\subsection{Lower-bounding the greedy strategy via the PGM}\label{subsec:greedy=pgm}

Now consider distinguishing the Boolean ensemble in \cref{eq:boolean-ensemble} via the PGM.
The average state is
\begin{equation}
    \rho_{\text{avg}}
    = p_0 \sigma_0 + p_1 \sigma_1 = \frac{\abs{S_0}}{2^n} \sigma_0 + \frac{\abs{S_1}}{2^n} \sigma_1
    = \frac{1}{2^n} \!\sum_{x \in \bit^n}\! \ketbra{\psi_x}
    = \Bigl(\frac{1}{2}\ketbra{\psi_0} + \frac{1}{2}\ketbra{\psi_1}\Bigr)^{\otimes n}\!\!
    = \rho^{\otimes n}.
\end{equation}
The PGM operators for distinguishing $\sigma_0$ and $\sigma_1$ are
\begin{equation}
    M_{\sigma_0}^{\text{PGM}}
    = \frac{\abs{S_0}}{2^n} \rho_{\text{avg}}^{-1/2} \sigma_0 \rho_{\text{avg}}^{-1/2}
    \quad \text{and} \quad
    M_{\sigma_1}^{\text{PGM}}
    = \frac{\abs{S_1}}{2^n} \rho_{\text{avg}}^{-1/2} \sigma_1 \rho_{\text{avg}}^{-1/2}.
\end{equation}
We can now establish the key connection:

\begin{theorem}\label{thm:pgm-local}
    For any Boolean function $f:\bit^n \to \bit$ and any inner product $s \in (0,1)$, the pretty good measurement achieves the same success probability as the greedy strategy for evaluating $f$.
\end{theorem}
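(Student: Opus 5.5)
The plan is to show that the PGM operators for the Boolean ensemble \emph{are} the coarse-grained greedy operators:
\begin{equation}
    M_{\sigma_i}^{\text{PGM}} = \sum_{y \in S_i} M_y, \qquad M_y = \bigotimes_{k=1}^n M_{y_k}.
\end{equation}
Once this identity holds, the PGM success probability $p_0\Tr(\sigma_0 M^{\text{PGM}}_{\sigma_0}) + p_1 \Tr(\sigma_1 M^{\text{PGM}}_{\sigma_1})$ coincides term by term with the right-hand side of \cref{eq:local_success}, and the theorem follows.

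The first step is to write out the PGM operator explicitly. Substituting the definition of $\sigma_i$ from \cref{eq:states} gives
\begin{equation}
    M_{\sigma_i}^{\text{PGM}} = \frac{1}{2^n} \sum_{x \in S_i} \rho_{\text{avg}}^{-1/2} \ketbra{\psi_x} \rho_{\text{avg}}^{-1/2},
\end{equation}
since the factor $\abs{S_i}$ cancels against $1/\abs{S_i}$.

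The second step handles the pseudo-inverse. The computation preceding the theorem shows $\rho_{\text{avg}} = \rho^{\otimes n}$. Because $s \in (0,1)$, the vectors $\ket{\psi_0},\ket{\psi_1}$ are linearly independent, so $\rho$ is invertible on the qubit space. Hence $(\rho^{\otimes n})^{-1/2} = (\rho^{-1/2})^{\otimes n}$ is a genuine inverse, and no subtlety arises from the Moore--Penrose convention.

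The third step factorizes each summand using $\ket{\psi_x} = \bigotimes_k \ket{\psi_{x_k}}$:
\begin{equation}
    \frac{1}{2^n}\rho_{\text{avg}}^{-1/2}\ketbra{\psi_x}\rho_{\text{avg}}^{-1/2} = \bigotimes_{k=1}^n \frac12 \rho^{-1/2}\ketbra{\psi_{x_k}}\rho^{-1/2} = \bigotimes_{k=1}^n M_{x_k} = M_x.
\end{equation}
Summing over $x \in S_i$ gives $M_{\sigma_i}^{\text{PGM}} = \sum_{x\in S_i} M_x$, which is exactly the POVM element of the greedy strategy for the event ``$f(y)=i$''.

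The final step substitutes this into the PGM success probability:
\begin{equation}
    p_i \Tr\bigl(\sigma_i M^{\text{PGM}}_{\sigma_i}\bigr) = \frac{\abs{S_i}}{2^n}\sum_{y\in S_i}\Tr(M_y\sigma_i),
\end{equation}
which matches \cref{eq:local_success}.

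The only point needing care is the tensor factorization of the inverse square root together with the invertibility of $\rho$. The argument relies on the uniform prior, which makes $\rho_{\text{avg}}$ a product state, and on the single-qubit optimal measurement being the PGM, as stated earlier. Neither is a real obstacle, so the proof should be a short direct computation.
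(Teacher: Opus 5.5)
Your proposal is correct and follows the paper's proof essentially step for step: both use $(\rho^{\otimes n})^{-1/2} = (\rho^{-1/2})^{\otimes n}$ to show $M_{\sigma_i}^{\text{PGM}} = \sum_{y\in S_i} M_y$, and then read the success probability off \cref{eq:local_success}. Your remark that $\rho$ is invertible because $s\in(0,1)$ is a clarification the paper leaves implicit; it is not needed, since the same factorization also holds for the Moore--Penrose pseudo-inverse.
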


\begin{proof}
    We show that the PGM operators can be decomposed as a sum of product measurements that precisely match the greedy strategy.
    For $i \in \bit$, starting with $M_{\sigma_i}^{\text{PGM}}$, we have
    \begin{subequations}
    \begin{align}
        M_{\sigma_i}^{\text{PGM}}
        &= \frac{\abs*{S_i}}{2^n} (\rho^{\otimes n})^{-1/2} \sigma_i (\rho^{\otimes n})^{-1/2}
        = \frac{\abs*{S_i}}{2^n} (\rho^{-1/2})^{\otimes n} \biggl(\frac{1}{\abs*{S_i}} \sum_{x \in S_i} \ketbra{\psi_x}\biggr) (\rho^{-1/2})^{\otimes n} \\
        &= \frac{1}{2^n} \sum_{x_1\ldots x_n \in S_i} \left(\rho^{-1/2} \ketbra{\psi_{x_1}} \rho^{-1/2}\right) \otimes \cdots \otimes \left(\rho^{-1/2} \ketbra{\psi_{x_n}} \rho^{-1/2}\right) \\
        &= \sum_{y_1 \ldots y_n \in S_i} M_{y_1} \otimes \cdots \otimes M_{y_n}
        = \sum_{y \in S_i} M_y.
    \end{align}
    \end{subequations}
    The success probability of the PGM is therefore
    \begin{subequations}\label{eq:local=pgm}
    \begin{align}
        \P(n,f,s,\mathrm{PGM}) &= p_0 \Tr(M_{\sigma_0}^{\text{PGM}} \sigma_0) + p_1 \Tr(M_{\sigma_1}^{\text{PGM}} \sigma_1) \\
        &= \frac{\abs{S_0}}{2^n} \sum_{y \in S_0} \Tr(M_y \sigma_0) + \frac{\abs{S_1}}{2^n} \sum_{y \in S_1} \Tr(M_y \sigma_1) \\
        &= \P(n,f,s,\greedy),
    \end{align}
    \end{subequations}
    where the last equality is by \cref{eq:local_success}.
\end{proof}
The proof of \cref{thm:PGM} follows immediately.
\begin{proof}[Proof of \cref{thm:PGM}]
    From \cref{thm:barnum-knill,thm:pgm-local} we have
    \begin{equation}
        \P(n,f,s,\glo)^2 \leq \P(n,f,s,\mathrm{PGM}) = \P(n,f,s,\greedy),
    \end{equation}
    completing the proof.
\end{proof}

\Cref{thm:pgm-local} reveals a remarkable structural property: although the PGM operators $M_{\sigma_i}^{\text{PGM}}$ are formulated as general operators on the full $n$-qubit Hilbert space, for the ensemble arising from Boolean function evaluation they decompose exactly into the local measurement operators of the greedy strategy:
\begin{equation}
    M_{\sigma_i}^{\text{PGM}} = \sum_{y_1 \ldots y_n \in S_i} M_{y_1} \otimes \cdots \otimes M_{y_n}.
\end{equation}
This decomposition shows that what appears to be a global measurement is, in fact, equivalent to independently measuring each qubit optimally.
Therefore, $\P(n,f,s,\greedy) = \P(n,f,s,\mathrm{PGM})$, and \cref{thm:PGM} follows immediately by applying the Barnum-Knill inequality (\cref{thm:barnum-knill}).

\subsection{Optimality for affine functions}\label{subsec:greedy=opt}

Having provided a universal guarantee for the greedy strategy for arbitrary Boolean functions, we find a class of Boolean functions for which it is optimal.
This is the content of \cref{thm:sufficient}, which we restate for convenience and prove below.
\sufficient*

\begin{proof}
    Any affine Boolean function is either constant or balanced.
    We consider the cases separately.

    \medskip
    \noindent\textbf{Case 1: Constant functions.}
    If $b_i = 0$ for all $i \in \{1,\ldots,n\}$, then $f(x) = b_0$ is constant.
    In this case, $S_0$ and $S_1$ are trivially distinguishable (one is empty and the other contains all strings), and both global and greedy strategies achieve success probability $1$.
    Thus, the greedy strategy is optimal.

    \medskip
    \noindent\textbf{Case 2: Balanced functions.}
    Without loss of generality\footnote{If $b_0 = 1$, then we are effectively just flipping the value of $f$ and also our guess.}, we assume $b_0=0$ and at least one $b_i \neq 0$ for $i \in \{1,\ldots,n\}$.
    Let $I = \{i \in [n] : b_i = 1\}$ denote the set of indices where the coefficient is $1$, and let $m = \abs{I} \geq 1$.
    Then,
    \begin{equation}
        f(x_1,\ldots,x_n) = \bigoplus_{i \in I} x_i.
    \end{equation}
    The set $S_0$ consists of all strings where $\bigoplus_{i \in I} x_i = 0$, and $S_1$ of all strings where $\bigoplus_{i \in I} x_i = 1$.
    Thus, $\abs{S_0} = \abs{S_1} = 2^{n-1}$.
    The optimal global success probability is given by the Helstrom measurement:
    \begin{equation}
        \P(n,f,s,\glo) = \frac{1}{2} + \frac{1}{4}\norm*{\sigma_0 - \sigma_1}_1.
    \end{equation}
    We compute
    \begin{subequations}
    \begin{align}
        \sigma_0 - \sigma_1 &= \frac{1}{2^{n-1}} \sum_{x \in S_0} \ketbra{\psi_x} - \frac{1}{2^{n-1}} \sum_{x \in S_1} \ketbra{\psi_x} \\
        &= \frac{1}{2^{n-1}} \sum_{x \in \bit^n} (-1)^{f(x)} \ketbra{\psi_x} \\
        &= \frac{1}{2^{n-1}} \sum_{x \in \bit^n} (-1)^{\bigoplus_{i \in I} x_i} \ketbra{\psi_x}.
    \end{align}
    \end{subequations}
    For the variables indexed by $i \in I$, we can factor out their contribution.
    For variables with $i \notin I$, summing over their values contributes a factor of $\left( \ketbra{\psi_0} + \ketbra{\psi_1} \right)$ for each such variable.
    Thus,
    \begin{subequations}
    \begin{align}
        \sigma_0 - \sigma_1 &= \frac{1}{2^{n-1}} \sum_{x \in \bit^n} (-1)^{\bigoplus_{i \in I} x_i} \ketbra{\psi_x} \\
        &= \frac{1}{2^{n-1}} \sum_{x_I \in \bit^m} \sum_{x_{\bar{I}} \in \bit^{n-m}} \Biggl[\bigotimes_{i \in I} (-1)^{x_i} \ketbra{\psi_{x_i}}\Biggr] \otimes \Biggl[\bigotimes_{j \notin I} \ketbra{\psi_{x_j}}\Biggr] \\
        &= \frac{1}{2^{n-1}} \Biggl[\bigotimes_{i \in I} \left( \ketbra{\psi_0} - \ketbra{\psi_1} \right) \Biggr] \otimes \Biggl[\bigotimes_{j \notin I} \left( \ketbra{\psi_0} + \ketbra{\psi_1} \right) \Biggr].
    \end{align}
    \end{subequations}
    The second equality uses the property $(-1)^{\bigoplus_{i \in I} x_i} = \bigotimes_{i \in I} (-1)^{x_i}$.
    Summation over $x_I$ and $x_{\bar{I}}$ indicates choosing $x_i$ where $i \in I$ and $j \notin I$, respectively.
    Using multiplicativity of the trace norm $\norm{A \otimes B}_1 = \norm{A}_1 \cdot \norm{B}_1$ and noting that $\norm[\Big]{ \ketbra{\psi_0} + \ketbra{\psi_1} }_1 = 2$ (since both are density operators),%
    \begin{subequations}
    \begin{align}
        \norm*{\sigma_0 - \sigma_1}_1
        &= \frac{1}{2^{n-1}} \norm[\Big]{ \ketbra{\psi_0} - \ketbra{\psi_1} }_1^m \cdot \norm[\Big]{ \ketbra{\psi_0} + \ketbra{\psi_1} }_1^{n-m} \\
        &= \frac{1}{2^{n-1}} \norm[\Big]{ \ketbra{\psi_0} - \ketbra{\psi_1} }_1^m \cdot 2^{n-m} \\
        &= \frac{1}{2^{m-1}} \norm[\Big]{ \ketbra{\psi_0} - \ketbra{\psi_1} }_1^m.
    \end{align}
    \end{subequations}
    Therefore,
    \begin{equation}
        \P(n,f,s,\glo) = \frac{1}{2} + \frac{1}{2^{m+1}} \norm[\Big]{ \ketbra{\psi_0} - \ketbra{\psi_1} }_1^m.
    \end{equation}

    Now, consider the greedy strategy.
    Let $p$ be the optimal success probability for discriminating $\left\{ \frac{1}{2}\ket{\psi_0}, \frac{1}{2}\ket{\psi_1} \right\}$ on a single qubit.
    The greedy strategy measures each qubit independently and computes $f$ on the resulting classical string $y$.
    Success occurs when $f(y) = f(x)$, which happens when $\bigoplus_{i \in I} y_i = \bigoplus_{i \in I} x_i$.
    For the $m$ qubits indexed by $I$, the parity of measurement outcomes must match the parity of inputs.
    For the remaining $n-m$ qubits, their outcomes are irrelevant.
    The probability that the parity is correct among the $m$ relevant qubits is
    \begin{subequations}
    \begin{align}
        \sum_{\substack{j=0 \\ j \text{ even}}}^m \binom{m}{j} p^{m-j}(1-p)^j
        &= \frac{1}{2} \left( \sum_{j=0}^m \binom{m}{j} p^{m-j}(1-p)^j + \sum_{j=0}^m (-1)^j \binom{m}{j} p^{m-j}(1-p)^j \right) \\[-.75em]
        &= \frac{1}{2} \left( (p + (1-p))^m + (p - (1-p))^m \right)
        = \frac{1}{2} + \frac{1}{2}(2p - 1)^m.
    \end{align}
    \end{subequations}
    For the optimal single-qubit measurement, $p = \frac{1}{2} + \frac{1}{4} \norm[\Big]{ \ketbra{\psi_0} - \ketbra{\psi_1} }_1$, resulting in \\ $2p - 1 = \frac{1}{2} \norm[\Big]{ \ketbra{\psi_0} - \ketbra{\psi_1} }_1$.
    Therefore,
    \begin{subequations}
    \begin{align}
        \P(n,f,s,\greedy) &= \frac{1}{2} + \frac{1}{2^{m+1}} \norm[\Big]{ \ketbra{\psi_0} - \ketbra{\psi_1} }_1^m \\
        &= \P(n,f,s,\glo),
    \end{align}
    \end{subequations} for all affine Boolean functions $f$.
\end{proof}

\section{Proof of Theorem~\ref{thm:necessary}: Only affine functions achieve optimality}\label{sec:necessary}
We now investigate the Boolean functions for which the PGM (and hence the greedy strategy) is globally optimal.

\subsection{Optimality criterion for the PGM}

The following lemma, due to Iten, Renes and Sutter~\cite{iten2016pretty}, provides a necessary and sufficient condition for the optimality of the PGM:

\begin{lemma}[PGM optimality~\cite{iten2016pretty}]\label{lem:pgmopt}
    Let $\{(\eta_x, \rho_x)\}$ be an ensemble of quantum states, and let $\ket{\xi_x}_{BB'}$ denote a purification of $(\rho_x)_B$ for each $x$.
    Then the pretty good measurement is optimal for distinguishing $\{(\eta_x, \rho_x)\}$ if and only if
    \begin{equation}\label{eq:PGM}
        [G_{X'B'},\tau_{X'B'}]=0,
    \end{equation}
    where $G_{X'B'}$ is the \emph{generalized Gram matrix} defined as
    \begin{equation}\label{eq:gengram}
        G_{X'B'} \coloneqq \sum_{x,x'} \sqrt{\eta_x \eta_{x'}} \ketbra{x}{x'}_{X'} \otimes \Tr_B\left(\ketbra{\xi_x}{\xi_{x'}}_{BB'}\right)
    \end{equation}
    and $\tau_{X'B'}$ is the block-diagonal operator
    \begin{equation}\label{eq:tau}
        \tau_{X'B'} \coloneqq \sum_{x'} \ketbra{x'}{x'}_{X'} \otimes \bra{x'}\sqrt{G_{X'B'}}\ket{x'}_{X'},
    \end{equation}
    formed from the diagonal blocks of $\sqrt{G_{X'B'}}$.
\end{lemma}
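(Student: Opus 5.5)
The plan is to derive \cref{lem:pgmopt} from the standard optimality conditions for minimum-error discrimination, transported to the purified space $X'B'$. By SDP duality for \cref{SDP:qsd} in its multi-outcome form (the Holevo--Yuen--Kennedy--Lax conditions), a POVM $\{M_x\}$ is optimal if and only if the operator $Y \coloneqq \sum_x \eta_x \rho_x M_x$ satisfies two conditions: $Y$ is Hermitian, and $Y \geq \eta_x \rho_x$ for every $x$. So I will compute $Y$ for the PGM and rewrite both conditions in terms of $G_{X'B'}$.

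\emph{Step 1: the PGM in the purified picture.} Define the linear map
\begin{equation}
    K \coloneqq \sum_x \sqrt{\eta_x}\, \bigl(\I_B \otimes \bra{x}_{X'}\bigr) \ldots
\end{equation}
More precisely, $K : \mathcal{H}_{X'}\otimes\mathcal{H}_{B'} \to \mathcal{H}_B$ is the map with $K(\ket{x}_{X'}\otimes\ket{\phi}_{B'}) = \sqrt{\eta_x}\,(\I_B\otimes\bra{\bar\phi}_{B'})\ket{\xi_x}_{BB'}$. With $P_x \coloneqq \ketbra{x}_{X'}\otimes \I_{B'}$, this map satisfies
\begin{equation}
    K P_x K^* = \eta_x\rho_x, \qquad K K^* = \rho_{\text{avg}}, \qquad K^*K = G_{X'B'}.
\end{equation}
I will use the polar decomposition $K = \rho_{\text{avg}}^{1/2} U$, where $U$ is a partial isometry onto the support of $\rho_{\text{avg}}$. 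This gives $\sqrt{G} = U^*\rho_{\text{avg}}^{1/2}U$ and $K^*U = \sqrt{G}$. Substituting into \cref{eq:pgm-operators} yields $M_x^{\text{PGM}} = U P_x U^*$ on the support, padded with an arbitrary completion on the kernel, which is irrelevant.

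\emph{Step 2: Hermiticity of $Y$ is exactly \cref{eq:PGM}.} For the PGM,
\begin{equation}
    Y = \sum_x K P_x K^* U P_x U^* = \rho_{\text{avg}}^{1/2} U \Bigl(\sum_x P_x \sqrt{G} P_x\Bigr) U^* = \rho_{\text{avg}}^{1/2} U \tau U^*,
\end{equation}
using \cref{eq:tau}. Conjugating by $U$, which is an isometry on the relevant supports, $Y = Y^*$ becomes $\sqrt{G}\,\tau = \tau\sqrt{G}$ on the support of $G$. Outside that support both sides vanish, since $\tau$ is built from blocks of $\sqrt{G}$. Because $G \geq 0$, commuting with $\sqrt{G}$ is equivalent to commuting with $G$. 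So Hermiticity of $Y$ is equivalent to $[G_{X'B'},\tau_{X'B'}]=0$, which already gives the ``only if'' direction.

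\emph{Step 3: the inequalities come for free.} For the ``if'' direction I must show that commutation forces $Y \geq \eta_x\rho_x$ for all $x$; I expect this to be the main obstacle. In the $X'B'$ frame, the condition becomes
\begin{equation}
    G^{1/4}\,\tau\,G^{1/4} \geq G^{1/4} \bigl(G^{1/4} P_x G^{1/4}\bigr) G^{1/4},
\end{equation}
roughly, with $\tau$ and $G$ now simultaneously diagonalizable. I would first try to reduce this to $\tau \geq \sqrt{G}^{1/2} P_x \sqrt{G}^{1/2}$, and then use that $\tau$ is the pinching of $\sqrt{G}$ and commutes with it, so that $\tau$ and $\sqrt{G}$ share eigenvectors. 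If that fails, I would fall back on the known route of Iten--Renes--Sutter: show that the PGM success probability $\sum_x \Tr(P_x\sqrt{G}P_x\sqrt{G})$ equals the value of a feasible dual point built from $Y$, and obtain dual feasibility from the fact that the PGM is the fixed point of their iteration when $[G,\tau]=0$. Note that for \cref{thm:necessary} only the necessity direction (Step 2) is needed, and that direction is the robust part of the argument.
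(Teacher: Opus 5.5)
The paper gives no proof of this lemma; it imports it from Iten--Renes--Sutter, so your argument has to stand on its own. The skeleton is sound. The Holevo--Yuen--Kennedy--Lax characterization, $K=\rho_{\text{avg}}^{1/2}U$, $M_x^{\text{PGM}}=UP_xU^*$ and $Y=\rho_{\text{avg}}^{1/2}U\tau U^*$ are all correct. (With your $K$, $K^*K$ is actually the entrywise conjugate of $G$ in \cref{eq:gengram}, which is harmless because the criterion is invariant under transposition.)

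The genuine error is in Step 2. Conjugating $Y=Y^*$ by $U$ gives only $Q[\sqrt{G},\tau]Q=0$, where $Q$ is the support projector of $G$. To conclude $[G,\tau]=0$ you also need $Q\tau(\I-Q)=0$. Your claim that ``outside that support both sides vanish'' is false: the pinching $\tau$ of $\sqrt{G}$ need not map $\ker G$ into itself when the purified vectors are linearly dependent across different $x$.

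Here is a counterexample. Take $\ket{\psi_1}=\ket0$ with $\eta_1=\tfrac14$, and $\ket{\psi_{2,3}}=\tfrac{1}{\sqrt3}\ket0\pm\sqrt{\tfrac23}\ket1$ with $\eta_{2,3}=\tfrac38$.
\begin{itemize}
\item Then $\rho_{\text{avg}}=\I/2$, so $Y=2\sum_x\eta_x^2\ketbra{\psi_x}$ is Hermitian.
\item However, $G=\tfrac12 Q$ has rank $2$ and $\tau=\sqrt2\,\mathrm{diag}(\tfrac14,\tfrac38,\tfrac38)$. So $[G,\tau]_{12}=G_{12}(\tau_{22}-\tau_{11})\neq0$.
\item Consistently, the PGM is not optimal here: $\tfrac38\bra{\psi_2}Y^{-1}\ket{\psi_2}=\tfrac{16}{15}>1$.
\end{itemize}
So Hermiticity of $Y$ is strictly weaker than \cref{eq:PGM}, and the direction you call robust is unproven for general ensembles.

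The fix uses dual feasibility. Set $D_x\coloneqq\sqrt G\tau Q-\sqrt GP_x\sqrt G$, which is the $U$-conjugate of $Y-\eta_x\rho_x\geq0$. Then $\sum_x\Tr(P_xD_xP_x)=\Tr(\sqrt G\tau)-\Tr(\tau^2)=0$. Hence, using $P_x\sqrt GP_x=\tau P_x$, we get $D_xP_x=-\sqrt G\tau(\I-Q)P_x=0$ for each $x$. Summing over $x$ gives $Q\tau(\I-Q)=0$.

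For the Boolean ensemble of \cref{thm:gamma-comm}, your Step 2 happens to suffice. The $2^n$ product states are linearly independent, so $\ker G$ contains only directions $\ket{i}_{X'}\ket{x}_{B'}$ with $x\notin S_i$, and $\tau$ annihilates these.

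Step 3 is left as a plan, so the ``if'' direction is not proved. Your reduction to $\tau\geq G^{1/4}P_xG^{1/4}$ is the right one, but shared eigenvectors are not what closes it. The key is that $\tau$ is block-diagonal. If $[G,\tau]=0$, then $\tau$ commutes with $G^{1/4}$ and with every $P_x$. So, with $\tau^{+}$ the pseudo-inverse,
\begin{equation*}
    P_xG^{1/4}\tau^{+}G^{1/4}P_x=(\tau^{+})^{1/2}P_x\sqrt{G}P_x(\tau^{+})^{1/2}=(\tau^{+})^{1/2}\tau P_x(\tau^{+})^{1/2}\leq\I .
\end{equation*}
You also have $\ker\tau\subseteq\ker G^{1/4}$, since $\langle v,\tau v\rangle=\sum_y\norm{G^{1/4}P_yv}^2$. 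Together these give $\tau\geq G^{1/4}P_xG^{1/4}$. Conjugating by $G^{1/4}$ gives $D_x\geq0$, and conjugating by $U$ gives $Y\geq\eta_x\rho_x$. With these two additions, your route becomes a complete, self-contained proof.
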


This criterion is powerful because it reduces the question of PGM optimality to checking whether two specific operators commute.
For our problem, the generalized Gram matrix of the Boolean ensemble is given by the following Lemma, the proof of which is in \cref{app:boolean-grams}.

\begin{restatable}{lemma}{booleangrams}\label{lem:gram-structure}
    The generalized Gram matrix $G$ for the Boolean ensemble has the block form
    \begin{equation}\label{eq:G-block}
        G = \frac{1}{2^n}
        \begin{pmatrix}
            \Gamma_0 & \Gamma' \\
            \Gamma'^{*} & \Gamma_1
        \end{pmatrix}
    \end{equation}
    where:
    \begin{itemize}
        \item $\Gamma_i$ is the Gram matrix of the states in $S_i$, with entries $(\Gamma_i)_{xy} = \braket{\psi_x}{\psi_y}$ for $x,y \in S_i$,
        \item $\Gamma'$ is the cross-Gram matrix with entries $(\Gamma')_{xy} = \braket{\psi_x}{\psi_y}$ for $x \in S_0$ and $y \in S_1$.
    \end{itemize}
\end{restatable}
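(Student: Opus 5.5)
The plan is to compute the generalized Gram matrix of \cref{lem:pgmopt} directly, choosing purifications that make the partial traces collapse to ordinary inner products.
The Boolean ensemble is $\{(p_0,\sigma_0),(p_1,\sigma_1)\}$ with $p_i=\abs{S_i}/2^n$ and $\sigma_i=\frac{1}{\abs{S_i}}\sum_{x\in S_i}\ketbra{\psi_x}$. For each $i\in\bit$ I take the purification
\begin{equation}
    \ket{\xi_i}_{BB'} = \frac{1}{\sqrt{\abs{S_i}}}\sum_{x\in S_i}\ket{\psi_x}_B\otimes\ket{x}_{B'},
\end{equation}
where $B'$ is a $2^n$-dimensional register with orthonormal basis $\{\ket{x}: x\in\bit^n\}$. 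Tracing out $B'$ gives $\sigma_i$, so this is a valid purification.

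The key step is the partial trace over $B$. For $i,j\in\bit$,
\begin{equation}
    \Tr_B\bigl(\ketbra{\xi_i}{\xi_j}\bigr) = \frac{1}{\sqrt{\abs{S_i}\abs{S_j}}}\sum_{x\in S_i}\sum_{y\in S_j}\braket{\psi_y}{\psi_x}\,\ketbra{x}{y}_{B'}.
\end{equation}
Multiplying by $\sqrt{p_ip_j}=\sqrt{\abs{S_i}\abs{S_j}}/2^n$ cancels the normalizations, so the $(i,j)$ block of $G_{X'B'}$ equals $\frac{1}{2^n}\sum_{x\in S_i,\,y\in S_j}\braket{\psi_y}{\psi_x}\ketbra{x}{y}$.

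Since $S_0$ and $S_1$ partition $\bit^n$, the $(i,j)$ block is supported only on $\mathrm{span}\{\ket{x}:x\in S_i\}\otimes\mathrm{span}\{\ket{y}:y\in S_j\}$. I will therefore restrict each block to these supports, that is, compress $G$ to the subspace $\bigoplus_i \ket{i}\otimes\mathrm{span}\{\ket{x}:x\in S_i\}$. The matrix has no weight outside this subspace, so the compression loses nothing, and on it $G$ is exactly the displayed $2\times 2$ block matrix.

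It remains to match the entries to $\Gamma_i$ and $\Gamma'$. Because all amplitudes are real (the states are real qubit states with $s>0$), we have $\braket{\psi_y}{\psi_x}=\braket{\psi_x}{\psi_y}=s^{d(x,y)}$, where $d(x,y)$ is the Hamming distance. So the diagonal blocks are $\Gamma_i$, the off-diagonal block is $\Gamma'$, and the lower block is its adjoint $\Gamma'^*$, which here is simply the transpose.

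The main obstacle is bookkeeping rather than mathematics. One needs care to justify identifying $G$ with its compression, so that $\tau$ and the commutator in \cref{lem:pgmopt} are computed consistently; the zero complement is invariant under both $\sqrt{G}$ and $\tau$, which settles this. One also needs to track the conjugation convention, but realness makes that moot.
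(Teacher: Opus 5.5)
Your proposal is correct and follows the paper's proof essentially step for step: the same purifications $\ket{\xi_i}$, the same partial trace over $B$, the cancellation of $\sqrt{p_ip_j}$ against the normalization, and the realness of $\braket{\psi_x}{\psi_y}=s^{d(x,y)}$ to fix the orientation of the blocks. Your choice of a common $2^n$-dimensional register $B'$, followed by compression onto $\bigoplus_i \ket{i}\otimes\mathrm{span}\{\ket{x}:x\in S_i\}$ (noting that $G$, $\sqrt{G}$ and $\tau$ all vanish on the complement), is a bit more careful than the paper, which says $B'$ has dimension $\abs{S_i}$ and leaves this identification implicit.
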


\subsection{Structural constraints from PGM optimality}

We now translate the PGM optimality condition into an explicit algebraic relation between the Gram matrices $\Gamma_0$, $\Gamma_1$, and $\Gamma'$.

\begin{theorem}\label{thm:gamma-comm}
    If the PGM is optimal for evaluating a Boolean function $f$, then the Gram matrices satisfy
    \begin{equation}\label{eq:gamma-comm}
        \Gamma_0 \Gamma' = \Gamma' \Gamma_1.
    \end{equation}
\end{theorem}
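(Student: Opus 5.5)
The plan is to read the commutation condition of \cref{lem:pgmopt} block by block, using the block form of $G$ from \cref{lem:gram-structure}. The key structural fact is that, after reordering rows and columns so that the strings in $S_0$ come first and those in $S_1$ second, $2^n G$ is just the full Gram matrix $\Gamma$ of all $2^n$ states $\ket{\psi_x}$. Since the states are product states, this matrix factors as
\begin{equation*}
    \Gamma = \gamma^{\otimes n}, \qquad \gamma = \begin{pmatrix} 1 & s \\ s & 1 \end{pmatrix}.
\end{equation*}
It follows that $\sqrt{G} = 2^{-n/2}\sqrt{\Gamma}$, where $\sqrt{\Gamma} = (\sqrt{\gamma})^{\otimes n}$ is a real symmetric matrix. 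I will write $\sqrt{\Gamma}$ in the same block form,
\begin{equation*}
    \sqrt{\Gamma} = \begin{pmatrix} R_0 & R' \\ R'^{*} & R_1 \end{pmatrix},
\end{equation*}
with $R_0, R_1$ Hermitian. By \cref{eq:tau}, $\tau$ is then (up to the factor $2^{-n/2}$) the block-diagonal matrix $\operatorname{diag}(R_0, R_1)$. I will only need the general block form here, not the explicit entries of $\sqrt{\gamma}$.

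\textbf{Step 1: pass from $G$ to $\sqrt{G}$.} Assume the PGM is optimal, so $[G,\tau]=0$ by \cref{lem:pgmopt}. Since $\sqrt{G}$ is a polynomial in $G$ (for instance by spectral interpolation), $\tau$ also commutes with $\sqrt{G}$. Writing out the off-diagonal block of $[\sqrt{G},\tau]=0$ gives the intertwining relation
\begin{equation*}
    R_0 R' = R' R_1 .
\end{equation*}
Taking adjoints and using that $R_0$ and $R_1$ are Hermitian gives the companion relation $R'^{*} R_0 = R_1 R'^{*}$.

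\textbf{Step 2: express the Gram blocks through the square-root blocks.} Squaring $\sqrt{\Gamma}$ blockwise and comparing with \cref{lem:gram-structure} gives
\begin{equation*}
    \Gamma_0 = R_0^2 + R'R'^{*}, \qquad \Gamma' = R_0R' + R'R_1, \qquad \Gamma_1 = R'^{*}R' + R_1^2 .
\end{equation*}
By Step 1, the middle expression simplifies to $\Gamma' = 2R_0R'$.

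\textbf{Step 3: verify the claimed identity.} I expand both sides using Step 2:
\begin{align*}
    \Gamma_0\Gamma' &= 2R_0^3R' + 2R'R'^{*}R_0R', \\
    \Gamma'\Gamma_1 &= 2R_0R'R'^{*}R' + 2R_0R'R_1^2.
\end{align*}
Applying $R_0R' = R'R_1$ repeatedly shows $R_0^3R' = R'R_1^3 = R_0R'R_1^2$, so the first terms agree. For the second terms, the companion relation gives $R'R'^{*}R_0R' = R'R_1R'^{*}R'$, while the intertwining relation gives $R_0R'R'^{*}R' = R'R_1R'^{*}R'$. Hence the second terms agree as well, and \cref{eq:gamma-comm} follows.

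The main obstacle I expect is the bookkeeping in the first identification, namely that $2^nG$ is exactly $\Gamma$ after reordering. To make this hold, I will choose the purifications in \cref{lem:pgmopt} to be
\begin{equation*}
    \ket{\xi_i} = \abs{S_i}^{-1/2}\sum_{x\in S_i}\ket{\psi_x}_B\ket{x}_{B'},
\end{equation*}
and check that the partial traces $\Tr_B\ketbra{\xi_i}{\xi_j}$ reproduce the Gram blocks of \cref{lem:gram-structure}. I also need to confirm that the $X'$-block structure of $\tau$ in \cref{eq:tau} corresponds precisely to the $S_0/S_1$ partition. Once that is in place, the rest is the short algebra of Steps 1--3.
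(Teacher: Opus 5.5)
Your proof is correct, but it reaches the identity by a somewhat different route than the paper.

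The paper's argument works only with the relation $[G,\tau]=0$:
\begin{itemize}
\item It names the blocks of $\sqrt{G}$ as $X$, $Y$, $Y^*$, $Z$.
\item From the $(1,2)$-block it reads off $\Gamma' Z = X\Gamma'$.
\item Substituting $\Gamma' = XY+YZ$ gives $X^2Y = YZ^2$.
\item Using $X^2=\Gamma_0-YY^*$ and $Z^2=\Gamma_1-Y^*Y$, this becomes $\Gamma_0 Y = Y\Gamma_1$.
\item It finishes with the diagonal-block relations $[\Gamma_0,X]=0$ and $[Z,\Gamma_1]=0$.
\end{itemize}

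You instead use that $\sqrt{G}$ is a polynomial in $G$ to transfer the commutation to $\sqrt{G}$. This gives you the stronger intertwining relation $R_0R'=R'R_1$ (in the paper's notation, $XY=YZ$) immediately. You then verify $\Gamma_0\Gamma'=\Gamma'\Gamma_1$ by direct expansion, and I checked that each term matches as you claim.

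What each route buys: the paper's version is purely algebraic. It needs only $(\sqrt{G})^2=G$ and the three block equations of $[G,\tau]=0$. Yours costs one functional-calculus step but yields sharper information. The diagonal blocks of $[\sqrt{G},\tau]$ vanish identically. Hence the Iten--Renes--Sutter condition of \cref{lem:pgmopt} is, for two hypotheses, equivalent to the single relation $R_0R'=R'R_1$, and it also gives $\Gamma' = 2R_0R'$.

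The identification of $2^nG$ with $\gamma^{\otimes n}$ is harmless but unused, as you note yourself. Nothing in Steps 1--3 depends on the product structure beyond the block form of \cref{lem:gram-structure}, so you can drop it.
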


\begin{proof}
    By \cref{lem:pgmopt}, the PGM is optimal if and only if $[G, \tau] = 0$, where $\tau$ is the block-diagonal operator formed from the diagonal blocks of $\sqrt{G}$.
    Ignoring the global factor $1/2^n$ (which does not affect the \gammacomm{}),
    \begin{align}
        G &=
        \begin{pmatrix}
            \Gamma_0 & \Gamma' \\
            \Gamma'^{*} & \Gamma_1
        \end{pmatrix},
        &
        \sqrt{G} &=
        \begin{pmatrix}
            X & Y \\
            Y^{*} & Z
        \end{pmatrix},
        &
        \tau &=
        \begin{pmatrix}
            X & 0 \\
            0 & Z
        \end{pmatrix}
    \end{align}
    where $X$ and $Z$ are positive semidefinite matrices, and $Y$ satisfies the constraint imposed by $\left( \sqrt{G} \right)^2 = G$.
    Expanding this constraint block-wise, we have
    \begin{align}\label{eq:block}
        X^2 + Y Y^{*} &= \Gamma_0, &
        X Y + Y Z &= \Gamma', &
        Z^2 + Y^{*} Y &= \Gamma_1.
    \end{align}
    Computing both sides of $G\tau = \tau G$, we have
    \begin{align}
        G\tau &=
        \begin{pmatrix}
            \Gamma_0 X & \Gamma' Z \\
            \Gamma'^{*} X & \Gamma_1 Z
        \end{pmatrix},
        &
        \tau G &=
        \begin{pmatrix}
            X \Gamma_0 & X \Gamma' \\
            Z \Gamma'^{*} & Z \Gamma_1
        \end{pmatrix}.
    \end{align}
    Equating the $(1,2)$-blocks yields
    \begin{equation}\label{eq:key}
        \Gamma' Z = X \Gamma'.
    \end{equation}
    We substitute the second equation of \cref{eq:block}, $\Gamma' = XY + YZ$, into \cref{eq:key} and simplify to get
    \begin{equation}
        YZ^2 = X^2 Y.
    \end{equation}
    Applying the other two equations of \cref{eq:block}, we have
    \begin{equation}
        Y(\Gamma_1 - Y^{*}Y) = (\Gamma_0 - YY^{*})Y,
    \end{equation}
    which we expand and cancel the common term $YY^{*}Y$ to obtain
    \begin{equation}\label{eq:YG}
        Y\Gamma_1 = \Gamma_0 Y.
    \end{equation}
    Finally, we compute $\Gamma_0\Gamma'$ using $\Gamma' = XY + YZ$ and the relations we have established.
    Note that $[\Gamma_0, X] = 0$ follows from comparing the $(1,1)$-blocks of $G\tau = \tau G$, and $[Z, \Gamma_1] = 0$ follows similarly from the $(2,2)$-blocks.
    Therefore,
    \begin{equation}
        \Gamma_0 \Gamma' = \Gamma_0 XY + \Gamma_0 YZ
        = X\Gamma_0 Y + Y\Gamma_1 Z = XY\Gamma_1 + YZ\Gamma_1 = \Gamma' \Gamma_1,
    \end{equation}
    completing the proof.
\end{proof}

\subsection{Consequences of the matrix relation}

Having established that PGM optimality implies $\Gamma_0 \Gamma' = \Gamma' \Gamma_1$, we now extract the structural consequences of this relation.
A key observation is that the inner product between any two product states can be expressed in terms of the inner product $s = \braket{\psi_0}{\psi_1}$ and the Hamming distance:
\begin{equation}\label{eq:inner-product-hamming}
    \braket{\psi_x}{\psi_y} = \prod_{i=1}^n \braket{\psi_{x_i}}{\psi_{y_i}} = s^{d(x,y)},
\end{equation}
where $d(x,y)$ is the Hamming distance between two $n$-bit strings $x$ and $y$.
This means that every entry of the matrices $\Gamma_0$, $\Gamma_1$, and $\Gamma'$ is a monomial in $s$, allowing us to view these matrices as matrix-valued polynomials in $s$.
The relation $\Gamma_0\Gamma' = \Gamma'\Gamma_1$ thus becomes a polynomial identity that must hold for all $s \in (0,1)$.

The polynomial structure of the matrix relation $\Gamma_0\Gamma' = \Gamma'\Gamma_1$ yields two immediate and powerful consequences.
First, by comparing the number of monic terms in corresponding matrix entries, we show that the function $f$ must be constant ($\abs{S_0}\abs{S_1}=0$) or balanced ($\abs{S_0} = \abs{S_1}$).

\begin{theorem}\label{thm:balanced}
    If the \gammacomm{} in \cref{eq:gamma-comm} holds, then the function $f$ must be either constant $(\abs{S_0}\abs{S_1} = 0)$ or balanced $(\abs{S_0} = \abs{S_1})$, except possibly for a finite set of values of $s \in (0,1)$.
\end{theorem}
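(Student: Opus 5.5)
The plan is to read the relation $\Gamma_0\Gamma' = \Gamma'\Gamma_1$ entrywise as an identity between polynomials in $s$, using \cref{eq:inner-product-hamming}, and to compare the two sides by evaluating at $s=1$, where each side simply counts its terms.

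If $S_0$ or $S_1$ is empty, then $f$ is constant and there is nothing to prove. So assume both are nonempty, and fix any $x \in S_0$ and $y \in S_1$; such a pair gives an entry of the $\abs{S_0}\times\abs{S_1}$ matrices $\Gamma_0\Gamma'$ and $\Gamma'\Gamma_1$. By \cref{eq:inner-product-hamming},
\begin{equation}
    (\Gamma_0\Gamma')_{xy} = \sum_{z \in S_0} s^{d(x,z)+d(z,y)}, \qquad (\Gamma'\Gamma_1)_{xy} = \sum_{z \in S_1} s^{d(x,z)+d(z,y)}.
\end{equation}
Define the difference
\begin{equation}
    P_{xy}(s) \coloneqq \sum_{z \in S_0} s^{d(x,z)+d(z,y)} - \sum_{z \in S_1} s^{d(x,z)+d(z,y)}.
\end{equation}
This is a polynomial in $s$ with integer coefficients, and every monomial has coefficient $1$ before the subtraction.

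The key step is to evaluate $P_{xy}$ at $s=1$. Every monomial equals $1$ there, so $P_{xy}(1) = \abs{S_0} - \abs{S_1}$. Suppose $f$ is not balanced, so that $\abs{S_0} \neq \abs{S_1}$. Then $P_{xy}(1) \neq 0$, and hence $P_{xy}$ is not the zero polynomial. Its degree is at most $2n$, so it has at most $2n$ roots. Consequently the \gammacomm{} in \cref{eq:gamma-comm}, which requires in particular $P_{xy}(s)=0$, can hold for at most finitely many $s \in (0,1)$. A single entry suffices for this conclusion, and none of the other entries needs to be examined.

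Taking the contrapositive: if \cref{eq:gamma-comm} holds for all but finitely many $s \in (0,1)$, then either one of $S_0, S_1$ is empty or $\abs{S_0}=\abs{S_1}$. This is exactly the statement of the theorem.

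I do not expect a serious obstacle. The one point needing care is that the point $s=1$ lies outside the physical range $(0,1)$. It is used only as an evaluation point to certify that $P_{xy}$ is a nonzero polynomial. The finite-root argument then transfers this back to the open interval, which matches the ``except possibly finitely many $s$'' qualifier in the statement.
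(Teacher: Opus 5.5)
Your proposal is correct and follows the paper's argument. Both read an $(x,y)$-entry of $\Gamma_0\Gamma'=\Gamma'\Gamma_1$ as an identity between a sum of $\abs{S_0}$ monic monomials and a sum of $\abs{S_1}$ monic monomials in $s$, and both conclude with the finite-root bound; your evaluation of the difference polynomial at $s=1$, giving $\abs{S_0}-\abs{S_1}\neq 0$, is a clean way to make rigorous the paper's informal claim that such sums ``cannot coincide.''
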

\begin{proof}
    By \cref{eq:inner-product-hamming}, every entry of the matrices $\Gamma_0$, $\Gamma_1$, and $\Gamma'$ is a monic monomial in $s = \braket{\psi_0}{\psi_1}$.
    Thus, every term in $ \Gamma_0 \Gamma' $ is a sum of $ \abs{S_0} $ monic monomials, and every term in $ \Gamma' \Gamma_1 $ is a sum of $ \abs{S_1} $ monic monomials.
    We aim to show that $\Gamma_0 \Gamma' = \Gamma' \Gamma_1$ implies the function to be constant or balanced, which we do via its contrapositive.
    If the function is not constant, i.e., $\abs{S_0}\abs{S_1} \neq 0$, then $\Gamma_0 \Gamma'$ and $\Gamma' \Gamma_1$ are non-empty.
    If the function is additionally not balanced, i.e., $\abs{S_0} \neq \abs{S_1}$, then the elements of $\Gamma_0 \Gamma'$ and $\Gamma' \Gamma_1$ cannot coincide, except possibly for finitely many values of $s$.
\end{proof}
Two comments are in order.
First, the implication of this theorem does not hold for some $s$ due to the \gammacomm{} being true regardless of the function structure.
Additionally, the finite set of $s$ exceptions depends on $f$.
In this and the following theorems, this clause means:
    Let $f$ be a Boolean function.
    For all except possibly finitely many $s \in (0,1)$, the implication holds.
Second, this constant or balanced consequence is used in \cref{thm:counting} below, and later in \cref{lem:hypercube-flip}.
As their statements are trivial for constant functions, we assume balanced functions when the \gammacomm{} holds.
We do this for ease of comprehension, noting that the same reasoning may hold for arbitrary functions when accounting for vacuously true statements.

The second consequence could be understood in terms of the \emph{Boolean hypercube} graph (see \cref{app:hypercube} for details), and we use it shortly after to derive details about function structure.
In this graph, nodes are $n$-bit strings and edges connect two nodes if and only if they differ by exactly one coordinate.
For any pair of vertices $x \in S_0$ and $y \in S_1$ in the Boolean hypercube, consider all intermediate vertices $w$ such that a path from $x$ to $y$ via $w$ has total length $L$ (where the path need not be shortest overall, just the sum $d(x,w) + d(w,y) = L$).
By equating coefficients of powers of $s$ in the polynomial identity, we show that, for balanced functions, the number of such intermediate vertices in $S_0$ equals the number in $S_1$, for every choice of $x$, $y$ and $L$.

\begin{theorem}\label{thm:counting}
    If the \gammacomm{} in \cref{eq:gamma-comm} holds and the function $f$ is balanced, then for all integers $L \in \{0,1,\ldots,2n\}$ and for all $x \in S_0$, $y \in S_1$,
    \begin{equation}\label{eq:counting-equality}
        \abs{\{\,u \in S_0 : d(x,u) + d(u,y) = L\,\}}
        =
        \abs{\{\,v \in S_1 : d(x,v) + d(v,y) = L\,\}}
    \end{equation}
    for all but possibly finitely many values of $s \in (0,1)$.
\end{theorem}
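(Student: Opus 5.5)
The plan is to read off the entry of the \gammacomm{} at position $(x,y)$, with $x \in S_0$ and $y \in S_1$, as an identity between two polynomials in $s$. Then we compare coefficients. By \cref{eq:inner-product-hamming}, $(\Gamma_0)_{xu} = s^{d(x,u)}$ and $(\Gamma')_{uy} = s^{d(u,y)}$, so
\begin{equation}
    (\Gamma_0\Gamma')_{xy} = \sum_{u \in S_0} s^{d(x,u)+d(u,y)} = \sum_{L=0}^{2n} a_L(x,y)\, s^L,
\end{equation}
where $a_L(x,y) = \abs{\{u \in S_0 : d(x,u)+d(u,y) = L\}}$. In the same way,
\begin{equation}
    (\Gamma'\Gamma_1)_{xy} = \sum_{v \in S_1} s^{d(x,v)+d(v,y)} = \sum_{L=0}^{2n} b_L(x,y)\, s^L,
\end{equation}
with $b_L(x,y)$ the count on the right-hand side of \cref{eq:counting-equality}. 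Because $0 \le d(x,w)+d(w,y) \le 2n$, both expressions are genuine polynomials of degree at most $2n$ with nonnegative integer coefficients.

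Next, fix $f$ and consider the difference polynomial $q_{xy}(s) = \sum_L (a_L - b_L)(x,y)\, s^L$. There are only finitely many pairs $(x,y) \in S_0 \times S_1$, at most $2^{2n}$. For each pair with $q_{xy} \not\equiv 0$, the zero set of $q_{xy}$ has at most $2n$ points. Let $E_f \subset (0,1)$ be the union of these zero sets over all such pairs; it is a finite set depending only on $f$. Now take any $s \notin E_f$ at which \cref{eq:gamma-comm} holds. Then $q_{xy}(s) = 0$ for every pair, which forces $q_{xy} \equiv 0$ identically. Hence $a_L(x,y) = b_L(x,y)$ for every $L$ and every pair, which is the claim. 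Balancedness does not enter this computation directly. It only ensures, via \cref{thm:balanced}, that $S_0$ and $S_1$ are both nonempty, so the statement is not vacuous; it is also consistent with summing the identity over $L$, which gives $\abs{S_0} = \abs{S_1}$.

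The main obstacle is interpreting the quantifier ``for all but finitely many $s$'' correctly. The sets $S_0$, $S_1$, the polynomials $q_{xy}$, and the exceptional set $E_f$ are all independent of $s$, with $s$ entering only through evaluation. So the argument is that a nonzero polynomial of degree at most $2n$ vanishes at at most $2n$ points. I would state explicitly that the exceptional set is this finite union, matching the convention the paper adopts after \cref{thm:balanced}.
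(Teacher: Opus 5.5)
Your proposal is correct and follows the paper's proof. Both read off the $(x,y)$ entry of $\Gamma_0\Gamma' = \Gamma'\Gamma_1$ as a polynomial identity in $s$ whose coefficients are the path counts, then conclude that the coefficients agree unless $s$ is one of the at most $2n$ roots of a nonzero difference polynomial. Your explicit exceptional set $E_f$, a union over the finitely many pairs $(x,y)\in S_0\times S_1$, is slightly more careful than the paper: the paper treats a single fixed pair and leaves uniformity over all pairs implicit.
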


\begin{proof}
    Fix any arbitrary $x \in S_0$ and $y \in S_1$.
    The $(x,y)$-entries of the matrix products in \cref{eq:gamma-comm} are
    \begin{align}
        (\Gamma_0 \Gamma')_{xy} &= \sum_{u \in S_0} s^{d(x,u) + d(u,y)}&
        (\Gamma' \Gamma_1)_{xy} &= \sum_{v \in S_1} s^{d(x,v) + d(v,y)}.
    \end{align}
    Equality of these expressions as polynomials in $s$ means
    \begin{equation}
        \sum_{u \in S_0} s^{d(x,u) + d(u,y)} = \sum_{v \in S_1} s^{d(x,v) + d(v,y)}.
    \end{equation}
    We can rewrite each sum by grouping terms according to the total distance $L = d(x,w) + d(w,y)$,%
    \begin{equation}\label{eq:counting-polynomial}
        \sum_{L=0}^{2n} s^L \abs{\{u \in S_0 : d(x,u) + d(u,y) = L\}}
        =
        \sum_{L=0}^{2n} s^L \abs{\{v \in S_1 : d(x,v) + d(v,y) = L\}}.
    \end{equation}
    This is a polynomial identity in $s$.
    For this to hold, either the coefficients of each power $s^L$ on both sides must agree (giving \cref{eq:counting-equality}), or $s$ must be one of (at most $2n$) roots of the above polynomial.
\end{proof}

\subsection{Necessity of the affine structure}

We now combine the results from the previous sections to obtain a complete characterization of when the greedy strategy is optimal.
A crucial element in our proof is the following lemma about the Boolean hypercube, the proof of which can be found in \cref{app:hypercube}.

\begin{restatable}[Hypercube flip characterization]{lemma}{hypercubeflip}\label{lem:hypercube-flip}
    Let $Q_n$ be the $n$-dimensional hypercube graph, and let $A, B \subseteq V(Q_n)$ be a partition satisfying
    \begin{enumerate}[label=(\roman*)]
        \item $A \cup B = V(Q_n)$ and $A \cap B = \varnothing$;
        \item $\abs{A} = \abs{B} = 2^{n-1}$ (balanced partition);
        \item\label{item:counting} for all $L \in \N$ and for all $x \in A$, $y \in B$,
        \begin{equation}\label{eq:count-condition}
            \abs{\{\,u \in A: d(x,u) + d(u,y) = L\,\}}
            =
            \abs{\{\,v \in B: d(x,v) + d(v,y) = L\,\}}.
        \end{equation}
    \end{enumerate}
    Then there exists a unique index $i \in [n]$ such that for every $x = (x_1,\ldots,x_n) \in \bit^n$,
    \begin{equation}
        x \in A \quad\Longleftrightarrow\quad x^{(i)} \in B,
    \end{equation}
    where $x^{(i)} = (x_1,\ldots,x_{i-1},\overline{x_i},x_{i+1},\ldots,x_n)$ denotes the bit-flip of $x$ in the $i$-th coordinate.
\end{restatable}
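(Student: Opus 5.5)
The plan is to extract only local information from condition \ref{item:counting}, namely from squares (4-cycles) of $Q_n$, and show that this already forces $A$ and $B$ to be the two parity classes of some nonempty coordinate set. The flip property then follows at once.

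\emph{Step 1 (squares).} Take $x \in A$, $y \in B$ with $d(x,y)=2$ and apply \cref{eq:count-condition} with $L=2$. The vertices counted are the geodesic interval between $x$ and $y$, namely $x$, $y$ and the two common neighbours $w_1,w_2$. The count gives $1+\#\{w_j\in A\} = 1+\#\{w_j\in B\}$, so $w_1$ and $w_2$ lie in different parts. Restated for an arbitrary square $a\,b\,c\,d$ of $Q_n$: if one diagonal pair $\{a,c\}$ is split between $A$ and $B$, then so is the other pair $\{b,d\}$. This rules out the $3$--$1$ colourings of a square. The only surviving patterns are monochromatic, two adjacent vertices in each part, or alternating. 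In each of these, the set of \emph{cut} edges (edges joining $A$ to $B$) is empty, a pair of opposite edges, or all four edges. The key consequence is that in every square, opposite (parallel) edges are either both cut or both uncut.

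\emph{Step 2 (propagation).} Any two edges of $Q_n$ in the same direction $i$ are connected by a chain of squares, each containing two consecutive edges of the chain as opposite sides: move the base point one coordinate $j\neq i$ at a time. Hence whether an edge is cut depends only on its direction. Let $D\subseteq[n]$ be the set of cut directions. Walking from a fixed vertex $x^0$ to any $x$ along a path, the side changes exactly at edges with direction in $D$. So $x\in A$ if and only if $\bigoplus_{i\in D}(x_i\oplus x^0_i)=0$, i.e.\ $A$ and $B$ are the two parity classes of the coordinates in $D$. Since $A,B$ are both nonempty by (ii), $D\neq\varnothing$. Choosing any $i\in D$, flipping coordinate $i$ changes the parity, which gives $x\in A \Leftrightarrow x^{(i)}\in B$.

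\emph{Step 3 (uniqueness), the expected obstacle.} The conclusion holds for \emph{every} $i\in D$. Conversely, for $i\notin D$ flipping $i$ preserves the side, so the set of valid indices is exactly $D$. Uniqueness would therefore require showing $\abs{D}=1$ using larger values of $L$. I expect this to fail: for $A=\{00,11\}$, $B=\{01,10\}$ in $Q_2$, a direct check shows \cref{eq:count-condition} holds for all $L$. This is consistent with \cref{thm:sufficient}, since parity functions satisfy the \gammacomm{}. I would therefore prove existence as above and read ``unique'' as ``a canonical choice'', for instance the least element of $D$. The downstream use only needs the flip property, together with the parity description of $A$ obtained in Step 2, which already shows that $f$ is affine.
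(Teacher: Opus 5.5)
Your proof is correct. It rests on the same local fact as the paper's proof: the $L=2$ instance of \cref{eq:count-condition} applied to a 4-cycle. The difference is in how that fact is used.

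The paper first invokes \cref{lem:min-dist} to obtain one cut edge $\{x,x^{(i)}\}$. It then propagates the split property along that single direction $i$ by successive non-$i$ flips. You instead show that in every square opposite edges have the same cut status. Hence cut status depends only on direction, and $A$ and $B$ are the two parity classes of a nonempty set $D\subseteq[n]$.

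This gives more than the lemma asks for:
\begin{itemize}
\item \cref{lem:min-dist} becomes unnecessary, since nonemptiness plus connectivity of $Q_n$ already forces $D\neq\varnothing$.
\item Balancedness is used only through nonemptiness.
\item The parity description shows directly that $f(x)=b_0\oplus\bigoplus_{i\in D}x_i$. So the induction in the proof of \cref{thm:necessary}, together with \cref{lem:reduced-optimality}, could be dropped entirely.
\end{itemize}

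Your Step~3 is also right: the word ``unique'' in the statement is false whenever $\abs{D}\ge 2$. The paper's own $n=2$ case exhibits this. For $A=\{00,11\}$, $B=\{01,10\}$, both $i=1$ and $i=2$ satisfy the flip property, although the text mentions only the second bit. This partition does satisfy (iii) for every $L$, as you checked directly. It must, because parity functions satisfy the Gram matrix relation for all $s$, via \cref{thm:sufficient,thm:pgm-local,thm:gamma-comm}.

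The paper's argument only ever establishes existence of $i$, and existence is all \cref{thm:function-form} uses. So reading the conclusion as existence, with for instance $i=\min D$ as a canonical choice, is the correct repair.
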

We start with the following theorem, which states that the functions for which the greedy strategy is optimal must necessarily follow a recursive structure.
\begin{theorem}\label{thm:function-form}
    If the \gammacomm{} in \cref{eq:gamma-comm} holds, then the function $f$ is either constant or there exists an index $ i \in [n] $ and a Boolean function
    $ g : \bit^{n-1} \to \bit $ such that
    \begin{equation}\label{eq:function-xor}
        f(x_1,x_2,\ldots,x_n)
        =
        g(x_1,\ldots,x_{i-1},x_{i+1},\ldots,x_n) \oplus x_i
    \end{equation}
    for all but possibly finitely many values of $s \in (0,1)$.
\end{theorem}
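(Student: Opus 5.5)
The plan is to chain together the three structural results already in place: \cref{thm:balanced}, \cref{thm:counting}, and \cref{lem:hypercube-flip}.

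Fix $f$ and assume the \gammacomm{} \eqref{eq:gamma-comm} holds. The exceptional set of $s$ is the union of the finitely many exceptional values coming from \cref{thm:balanced} and from \cref{thm:counting}. The latter contributes at most $2n$ roots for each of the at most $4^n$ pairs $(x,y)$, so the union is still finite. Outside this set we argue as follows.

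By \cref{thm:balanced}, $f$ is either constant or balanced. If it is constant we are done. Otherwise $\abs{S_0}=\abs{S_1}=2^{n-1}$. Then \cref{thm:counting} gives the counting identity \eqref{eq:counting-equality} for every $L\in\{0,\ldots,2n\}$ and every $x\in S_0$, $y\in S_1$. For $L>2n$ both sides count the empty set, because $d(x,w)+d(w,y)\le 2n$. Hence the identity holds for all $L\in\N$.

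Now take $A=S_0$ and $B=S_1$. These form a partition of $V(Q_n)$, so conditions (i)–(iii) of \cref{lem:hypercube-flip} hold. The lemma produces an index $i$ such that $x\in S_0 \iff x^{(i)}\in S_1$, that is, $f(x^{(i)})=f(x)\oplus 1$ for every $x$.

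It remains to construct $g$. For $z\in\bit^{n-1}$, let $\hat z$ denote the $n$-bit string obtained by inserting a $0$ in position $i$, and define $g(z)\coloneqq f(\hat z)$. Now take any $x$ and let $x_{-i}$ denote $x$ with coordinate $i$ deleted. If $x_i=0$, then $f(x)=g(x_{-i})=g(x_{-i})\oplus x_i$. If $x_i=1$, then $x=\widehat{x_{-i}}^{(i)}$, so $f(x)=f(\widehat{x_{-i}})\oplus 1=g(x_{-i})\oplus x_i$. This is exactly \eqref{eq:function-xor}.

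All of the real work sits inside \cref{lem:hypercube-flip}, which we take as given. Within this argument, the only step that needs care is making sure the exceptional set in $s$ stays finite when we quantify over all pairs $(x,y)$ and all $L$. Since $n$ and $f$ are fixed, this is just a finite union of finite root sets.
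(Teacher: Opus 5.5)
Your proposal is correct and takes the same route as the paper's proof: \cref{thm:balanced} gives constant or balanced, \cref{thm:counting} gives the counting identity, \cref{lem:hypercube-flip} is applied with $A=S_0$ and $B=S_1$, and $g$ is defined by setting $x_i=0$. Beyond the paper, you fill in details it leaves implicit, all of them soundly: you take the finite union of exceptional values of $s$ over all pairs $(x,y)$, you extend the counting identity to $L>2n$ so that hypothesis (iii) of the lemma holds for all $L\in\N$, and you check explicitly that $f(x^{(i)})=f(x)\oplus 1$ yields $f=g\oplus x_i$.
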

\begin{proof}
    Starting from the \gammacomm{}, \cref{thm:balanced} gives that the function is either constant (in which case the proof ends) or balanced.
    From \cref{thm:counting}, the \gammacomm{} additionally means that for all
    $x \in S_0$, $y \in S_1$ and integers $L$,%
    \begin{equation}\label{eq:counting-balance}
        \abs{\{\,u \in S_0 : d(x,u) + d(u,y) = L\,\}}
        =
        \abs{\{\,v \in S_1 : d(x,v) + d(v,y) = L\,\}}
    \end{equation}
    for all but possibly finitely many values of the inner product $s$.
    By the hypercube structure of $\bit^n$ and \cref{lem:hypercube-flip}, the above two consequences imply that there exists an index $ i \in [n] $ such that flipping the $i$-th bit maps $S_0$ bijectively onto $S_1$.
    Equivalently, for every $x=(x_1,\ldots,x_n)\in\bit^n$,
    \begin{equation}
        x \in S_0
        \quad\Longleftrightarrow\quad
        x^{(i)} \in S_1,
    \end{equation}
    where $x^{(i)}$ denotes the bit-flip of $x$ in the $i$-th coordinate.
    Define
    \begin{equation}
        g(x_1,\ldots,x_{i-1},x_{i+1},\ldots,x_n)
        \coloneqq f(x_1,\ldots,x_{i-1},0,x_{i+1},\ldots,x_n).
    \end{equation}
    Then, by construction,
    \begin{equation}
        f(x_1,\ldots,x_n)
        =
        g(x_1,\ldots,x_{i-1},x_{i+1},\ldots,x_n) \oplus x_i,
    \end{equation}
    which proves \cref{eq:function-xor}, as needed.
\end{proof}

The recursive structure revealed in \cref{thm:function-form} allows us to apply induction to completely characterize optimal functions by proving \cref{thm:necessary}.

We briefly outline the essential steps of the proof for the reader's convenience.
We have previously seen an equivalent condition for PGM optimality in \cref{lem:pgmopt}, from which we derived a necessary condition in \cref{thm:gamma-comm}: the \gammacomm{} in \cref{eq:gamma-comm}.
The main consequence of this is \cref{thm:function-form}, which constrains the function to a recursive form given by \cref{eq:function-xor}, which when shown via induction for each $n$ implies the function is affine.
In particular, our inductive hypothesis is: if $g$ is an $(n-1)$-bit Boolean function for which \cref{eq:gamma-comm} holds, then $g$ must be affine.
The remaining part of the induction step is to show the antecedent of the above, which we do in \cref{lem:reduced-optimality}.
Specifically, if \cref{eq:gamma-comm,eq:function-xor} hold for an $n$-bit Boolean function $f$, then the $(n-1)$-bit Boolean function $g$ that appears in the recursive form of $f$ must satisfy the \gammacomm{} in \cref{eq:gamma-comm}.
Therefore any Boolean function satisfying \cref{eq:gamma-comm} must be affine, implying that $f$ itself must be affine.

\necessity*

\begin{proof}
    We prove this statement via its contrapositive.
    Since the greedy strategy is globally optimal for evaluating a Boolean function $f$, \cref{thm:gamma-comm} implies that the associated Gram matrices satisfy $\Gamma_0 \Gamma' = \Gamma' \Gamma_1$.
    Therefore, it suffices to show that this in turn means the function $f$ must be an affine Boolean function.
    We prove this statement by induction on the number of variables $n$.

    \medskip
    \noindent\textbf{Base case}: The case when $n=1$ is trivial.
    For illustrative purposes, we additionally show the $n=2$ case.
    By \cref{thm:balanced}, $f$ is either constant or balanced.
    If $f$ is constant, the claim is again trivial.
    If $f$ is balanced then $\abs{f^{-1}(0)} = 2$.
    Without loss of generality, assume that $f(0,0) = 0$; otherwise, we may replace $f$ by its complement $\overline{f}$, which is equivalent to flipping the value of $b_0$.
    The possible choices for $f^{-1}(0) \subseteq \{00, 01, 10, 11\}$ are:
    $\{00, 01\}, \{00, 10\},$ and $\{00, 11\}$.
    Respectively, these correspond to the functions
    \begin{align}
        f(x_1, x_2) &= x_1,&
        f(x_1, x_2) &= x_2,&
        f(x_1, x_2) &= x_1 \oplus x_2.
    \end{align}
    All three are affine, confirming the claim for $n = 2$.

    \medskip
    \noindent\textbf{Induction step}:
    Let $n \geq 2$ be an integer and assume the inductive hypothesis for $n-1$, which is that if the Gram matrices of an $(n-1)$-bit Boolean function satisfy the \gammacomm{} in \cref{eq:gamma-comm}, then that function is affine.

    Consider an arbitrary $n$-bit Boolean function $f$ that (by the antecedent of the statement to be proven) satisfies the \gammacomm{} in \cref{eq:gamma-comm}.
    By \cref{thm:function-form}, this function $f$ is either constant (and thus affine with $b_1=\ldots=b_n=0$ and $b_0$ equal to the function value, meaning we are done), or there exists an index $i \in [n]$ and an $(n-1)$-bit Boolean function $g:\bit^{n-1} \to \bit$ such that
    \begin{equation}
        f(x_1,\ldots,x_n) = g(x_1,\ldots,x_{i-1},x_{i+1},\ldots,x_n) \oplus x_i.
    \end{equation}
    These two facts let us show the antecedent of the induction hypothesis via the following statement.

    \begin{lemma}\label{lem:reduced-optimality}
        Let $G_0$ (resp.~$G_1$) be the Gram matrix of the states corresponding to inputs of $g$ that map to $0$ (resp.~$1$), and let $G'$ be the cross-Gram matrix.
        If $f$ has the form in \cref{eq:function-xor} and satisfies the \gammacomm{} in \cref{eq:gamma-comm}, then $g$ also satisfies the \gammacomm{}, i.e., $G_0 G' = G' G_1$.
    \end{lemma}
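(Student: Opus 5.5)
The plan is a direct block computation: write the Gram matrices of $f$ in terms of those of $g$, then read off the relation for $g$ from one block of $\Gamma_0\Gamma' = \Gamma'\Gamma_1$.

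\emph{Reduction.} Permuting coordinates is a simultaneous permutation of the rows and columns of all Gram matrices. It preserves Hamming distances and the \gammacomm{}. So we may assume $i=n$ and write $x=(z,b)$ with $z\in\bit^{n-1}$ and $b\in\bit$. Let $T_0=g^{-1}(0)$ and $T_1=g^{-1}(1)$. Then $G_0$, $G_1$, $G'$ have entries $s^{d(z,z')}$ indexed by $T_0\times T_0$, $T_1\times T_1$ and $T_0\times T_1$ respectively. From \cref{eq:function-xor},
\begin{equation}
    S_0=(T_0\times\{0\})\cup(T_1\times\{1\}),\qquad S_1=(T_1\times\{0\})\cup(T_0\times\{1\}).
\end{equation}
By \cref{eq:inner-product-hamming}, $\braket{\psi_{(z,b)}}{\psi_{(z',b')}}=s^{d(z,z')}\,s^{[b\neq b']}$.

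\emph{Block forms.} Order $S_0$ as $(T_0,0),(T_1,1)$ and $S_1$ as $(T_1,0),(T_0,1)$. The matrices of $f$ then take the forms
\begin{equation}
    \Gamma_0=\begin{pmatrix} G_0 & sG' \\ sG'^{*} & G_1\end{pmatrix},\quad
    \Gamma_1=\begin{pmatrix} G_1 & sG'^{*} \\ sG' & G_0\end{pmatrix},\quad
    \Gamma'=\begin{pmatrix} G' & sG_0 \\ sG_1 & G'^{*}\end{pmatrix}.
\end{equation}
Each block follows from whether the last bits agree, which contributes a factor $1$ or $s$. All entries are real, so $G'^{*}$ is just the transpose. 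I would double-check these blocks carefully, since the whole argument rests on them.

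\emph{Key step.} Multiply out only the $(1,1)$ blocks of the two sides of $\Gamma_0\Gamma'=\Gamma'\Gamma_1$:
\begin{equation}
    (\Gamma_0\Gamma')_{11}=G_0G'+s^2G'G_1,\qquad (\Gamma'\Gamma_1)_{11}=G'G_1+s^2G_0G'.
\end{equation}
Equating them gives $(1-s^2)(G_0G'-G'G_1)=0$. Since $s\in(0,1)$, we have $1-s^2\neq0$, hence $G_0G'=G'G_1$.

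This holds at exactly the same value of $s$ at which the relation for $f$ holds. So the ``all but finitely many $s$'' exceptions of the inductive argument are preserved: the exceptional set for $g$ is contained in that for $f$. The edge cases where $g$ is constant ($T_0$ or $T_1$ empty) make the blocks empty, and the conclusion holds vacuously.

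The main obstacle is bookkeeping: getting the orderings and the placement of the $s$ factors right in $\Gamma'$ and $\Gamma_1$. The remaining blocks are not needed. As a consistency check, I would confirm that the $(2,2)$ blocks yield the adjoint-type identity $G_1G'^{*}=G'^{*}G_0$, which is equivalent to the relation obtained above.
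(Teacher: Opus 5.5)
Your proposal is correct and is essentially the paper's proof: both write $\Gamma_0,\Gamma_1,\Gamma'$ in $2\times 2$ block form indexed by $g^{-1}(0)$, $g^{-1}(1)$ and the value of $x_i$, then compare a single block of $\Gamma_0\Gamma'=\Gamma'\Gamma_1$ to get $(1-s^2)(G_0G'-G'G_1)=0$. The only difference is the ordering of $S_1$, which is cosmetic. The paper lists the $x_i=1$ elements of $S_1$ first, so that $\Gamma_1=\Gamma_0$ and the identity appears in the $(1,2)$ block, whereas your reversed ordering puts it in the $(1,1)$ block; your block forms check out.
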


    \begin{proof}
        We have that $\Gamma_0 \Gamma' = \Gamma' \Gamma_1$.
        Order the elements of $S_0$ by first listing those with $x_i=0$ and then those with $x_i=1$.
        Analogously, order $S_1$ with $x_i=1$ elements first and $x_i=0$ second.
        With respect to this ordering, the Gram matrices have the block form
        \begin{align}
            \Gamma_0 &= \Gamma_1 =
            \begin{pmatrix}
                G_0 & s G' \\
                s G'^{*} & G_1
            \end{pmatrix},&
            \Gamma' &=
            \begin{pmatrix}
                s G_0 & G' \\
                G'^{*} & s G_1
            \end{pmatrix},
        \end{align}
        where $s=\braket{\psi_0}{\psi_1}$.
        We now compute both sides of \cref{eq:gamma-comm}.
        First,
        \begin{equation}
            \Gamma_0 \Gamma'
            =
            \begin{pmatrix}
                s G_0^2 + s G' G'^{*}
                &
                G_0 G' + s^2 G' G_1
                \\[4pt]
                s^2 G'^{*} G_0 + G_1 G'^{*}
                &
                s G'^{*} G' + s G_1^2
            \end{pmatrix}.
        \end{equation}
        Similarly,
        \begin{equation}
            \Gamma' \Gamma_1
            =
            \begin{pmatrix}
                s G_0^2 + s G' G'^{*}
                &
                s^2 G_0 G' + G' G_1
                \\[4pt]
                G'^{*} G_0 + s^2 G_1 G'^{*}
                &
                s G'^{*} G' + s G_1^2
            \end{pmatrix}.
        \end{equation}
        Equality of these two matrices implies equality of their $(1,2)$-blocks, hence
        \begin{equation}
            G_0 G' + s^2 G' G_1
            =
            s^2 G_0 G' + G' G_1 .
        \end{equation}
        Rearranging gives
        \begin{equation}
            (1-s^2)\, G_0 G' = (1-s^2)\, G' G_1 .
        \end{equation}
        Since $s\in(0,1)$, we conclude
        \begin{equation}
            G_0 G' = G' G_1 ,
        \end{equation}
        as claimed.
    \end{proof}

    Returning to the main proof: since we previously showed that $f$ satisfies \cref{eq:gamma-comm,eq:function-xor}, the just-proven \cref{lem:reduced-optimality} implies that $g$ also satisfies \cref{eq:gamma-comm}.
    By the induction hypothesis, $g$ is then affine, meaning it has the form
    \begin{equation}
        g(x_1,\ldots,x_{i-1},x_{i+1},\ldots,x_n) = b_0 \oplus \bigoplus_{j \neq i} b_j x_j.
    \end{equation}
    Substituting into \cref{eq:function-xor},
    \begin{equation}
        f(x_1,\ldots,x_n) = \Bigl( b_0 \oplus \bigoplus_{j \neq i} b_j x_j \Bigr) \oplus x_i = b_0 \oplus \bigoplus_{j=1}^n b_j x_j,
    \end{equation}
    where we set $b_i = 1$ and $b_j$ for $j \neq i$ as in $g$.
    This is precisely the affine form in \cref{eq:affine}, completing the induction and the proof.
\end{proof}

\section{Conclusion}\label{sec:discussion}
We characterized when memoryless measurement strategies suffice for learning Boolean properties of quantum sequences.
Given a sequence of $n$ qubits, each in one of two known states $\ket{\psi_0}$ or $\ket{\psi_1}$ having inner product $s \in (0,1)$, the problem is to determine the value of a Boolean function of the (unknown) underlying bit string.
We posed this as a state discrimination problem between two mixed states corresponding to the two values of the Boolean function, and studied the minimum-error regime.

We considered a simple greedy strategy which measures each qubit independently to optimally distinguish $\ket{\psi_0}$ from $\ket{\psi_1}$ requiring no quantum or classical memory.
We showed that for evaluating affine Boolean functions, this strategy is equivalent to the optimal global strategy which could utilize quantum memory and joint measurements.
We also established the converse of this, i.e., any Boolean function that is not affine cannot be learned optimally using this greedy strategy (except for possibly at finitely many values of $s$).
This suggests that all other Boolean properties fundamentally require memory for optimal learning.

Beyond this characterization, we proved that the greedy strategy always achieves at least the square of the optimal success probability, establishing it as the pretty good measurement for this discrimination problem.
This guarantees that even when not optimal, memoryless strategies remain competitive.

Previous works have explored cryptographic primitives in the bounded quantum-storage model \cite{damgaard2008cryptography} and state discrimination with limited quantum memory \cite{ballester2008state}.
Since both quantum memory and joint measurements remain technologically challenging \cite{lvovsky2009optical,ma2021one,conlon2023approaching}, characterizing when memoryless strategies suffice is of significant interest.
Our results could inform settings such as quantum reading of classical memories \cite{pirandola2011quantum}, and quantum-enhanced pattern recognition \cite{ortolano2023quantum}.

There are several avenues for future research.
The first is to study the greedy strategy for learning Boolean properties of quantum sequences for other priors (e.g., unequal, correlated) or other settings, such as unambiguous discrimination.
The second is to introduce limited classical memory to allow some degree of adaptation.
It is of interest to investigate which other classes of Boolean functions can be optimally learned using adaptive local strategies.
Finally, studying online or streaming strategies where qubits arrive sequentially and the Boolean function value is continuously updated as the sequence grows, is another direction for future research.

\subsection*{Software}
Companion software for computing optimal discrimination probabilities, including the semidefinite programs used in this work, is available in the \texttt{toqito} quantum information package~\cite{russo2021toqito}.

\subsection*{Acknowledgements}
We thank Gary Au for helpful discussions on some of our combinatorial arguments.
J.S.~and A.Z.~are funded in part by the Commonwealth of Virginia's Commonwealth Cyber Initiative (CCI) under grant number 469351.
T.G.~acknowledges the Indian Statistical Institute, Kolkata for support during the early stages of this work.
T.G.~acknowledges financial support from the ANRF National Post-Doctoral Fellowship (NPDF) under File No.~PDF/2025/005147.

\bibliographystyle{unsrt-links}
\bibliography{bib-minerr}

\appendix
\crefalias{section}{appendix}

\section{The Gram matrices of the Boolean ensemble}\label{app:boolean-grams}
To apply the PGM optimality criterion from \cref{lem:pgmopt}, we must compute the generalized Gram matrix $G$ for the Boolean ensemble (\cref{eq:boolean-ensemble}).
We begin by constructing explicit purifications of the mixed states $\sigma_0$ and $\sigma_1$.

For each $i \in \bit$, a purification of $\sigma_i$ is given by
\begin{equation}
    \ket{\xi_i}_{BB'} = \frac{1}{\sqrt{\abs*{S_i}}} \sum_{x \in S_i} \ket{\psi_x}_B \ket{x}_{B'},
\end{equation}
where the auxiliary system $B'$ has dimension $\abs*{S_i}$, and the states $\{\ket*{y}_{B'}\}_{y \in S_i}$ form an orthonormal basis of $B'$.
Here, $B$ denotes the physical $n$-qubit system containing the state $\ket{\psi_x}$, while $B'$ is the purifying system that tracks which element of $S_i$ was selected.

Recall from \cref{eq:gengram} that the generalized Gram matrix is defined as
\begin{equation}
    G_{X'B'} \coloneqq \sum_{i,j \in \bit} \sqrt{p_i p_j} \ketbraij \otimes \Tr_B\left(\ketbra{\xi_i}{\xi_j}_{BB'}\right),
\end{equation}
where $i, j \in \bit$ index the two hypotheses (whether the input belongs to $S_0$ or $S_1$), and $p_x$ are the prior probabilities: $p_0 = \abs{S_0}/2^n$ and $p_1 = \abs{S_1}/2^n$.

We compute the partial trace:
\begin{equation}
    \Tr_B\left(\ketbra{\xi_i}{\xi_j}_{BB'}\right)
    = \Tr_B\left(\frac{1}{\sqrt{\abs*{S_i}\abs*{S_j}}} \sum_{\substack{x \in S_i \\ y \in S_j}} \ketbra{\psi_x}{\psi_y}_B \otimes \ketbra*{x}{y}_{B'}\right)
    = \frac{1}{\sqrt{\abs*{S_i}\abs*{S_j}}} \sum_{\substack{x \in S_i \\ y \in S_j}} \braket{\psi_y}{\psi_x} \!\ketbra*{x}{y}_{B'}.
\end{equation}

Substituting into the definition of $G$ and using $\sqrt{p_i p_j} = \sqrt{\abs*{S_i}\abs*{S_j}}/2^n$, we obtain
\begin{subequations}
\begin{align}
    G_{X'B'} &= \sum_{i,j \in \bit} \sqrt{p_i p_j} \ketbraij \otimes \Tr_B\left(\ketbra{\xi_i}{\xi_j}_{BB'}\right) \\
    &= \sum_{i,j \in \bit} \frac{\sqrt{\abs*{S_i}\abs*{S_j}}}{2^n} \ketbraij \otimes \frac{1}{\sqrt{\abs*{S_i}\abs*{S_j}}} \sum_{\substack{x \in S_i \\ y \in S_j}} \braket{\psi_y}{\psi_x} \!\ketbra*{x}{y}_{B'} \\
    &= \frac{1}{2^n} \sum_{i,j \in \bit} \!\ketbraij \otimes \sum_{\substack{x \in S_i \\ y \in S_j}} \braket{\psi_y}{\psi_x} \ketbra*{x}{y}_{B'}.
\end{align}
\end{subequations}
This calculation reveals a natural block structure for $G$, which allows us to prove \cref{lem:gram-structure}.

\booleangrams*

\begin{proof}
    The block structure follows directly from the calculation above.
    The $(i,j)$-block of $G$ corresponds to the operator acting on $\ketbraij$, which contains the matrix of inner products $\braket{\psi_y}{\psi_x}$ for $x \in S_i$ and $y \in S_j$.
    Since the states have real inner products, we use the identity $\braket{\psi_y}{\psi_x} = \braket{\psi_x}{\psi_y}$ to match the order in the standard definition of Gram matrices.
    For $i = j$, this gives the Gram matrices $\Gamma_0$ and $\Gamma_1$.
    For $i \neq j$, this gives the cross-Gram matrices $\Gamma'$ and $\Gamma'^*$.
    The global factor $1/2^n$ appears uniformly across all blocks.
\end{proof}

\section{The Boolean hypercube graph}\label{app:hypercube}
The structure of Boolean functions is intimately connected to the geometry of the $n$-dimensional hypercube.
For $n \in \N$, the \emph{$n$-dimensional hypercube graph} $Q_n$ is defined as follows:
\begin{itemize}
    \item The vertex set is $V(Q_n) = \bit^n$, the set of all $n$-bit strings.
    \item Two vertices $x = (x_1,\ldots,x_n)$ and $y = (y_1,\ldots,y_n)$ are adjacent if and only if they differ in exactly one coordinate.
\end{itemize}

The \emph{Hamming distance} $d(x,y)$ between two vertices $x$ and $y$ is the number of coordinates in which they differ, which equals the graph distance in $Q_n$:
\begin{equation}
    d(x,y) = \abs{\{i \in [n] : x_i \neq y_i\}},
\end{equation}
where $[n] = \{1, \ldots, n\}$.
We denote adjacency by `$\sim$', meaning $x \sim y$ if and only if $d(x, y) = 1$.
The complete graph on two vertices is denoted as $K_2$, i.e., the graph with vertex set
$\bit$ with a single edge connecting them.
A perfect matching in a graph is a set of pairwise disjoint edges such that every vertex of the graph is incident to exactly one edge in the set.

An important property of the hypercube is its inductive structure via the Cartesian product.
Recall that the Cartesian product $G \cartprod H$ of two graphs has vertex set $V(G) \times V(H)$, where $(g,h) \sim (g',h')$ if either $g = g'$ and $h \sim h'$ in $H$, or $h = h'$ and $g \sim g'$ in $G$.
The hypercube satisfies the recursive relation
\begin{equation}\label{eq:cart-prod}
    Q_{n+1} \cong Q_n \cartprod K_2,
\end{equation}
meaning $Q_{n+1}$ can be viewed as two copies of $Q_n$ connected by a perfect matching.
More precisely, if we partition $V(Q_{n+1})$ according to the last bit, we obtain two sub-cubes $Q_n \times \{0\}$ and $Q_n \times \{1\}$, each isomorphic to $Q_n$, with matching edges connecting vertices that differ only in the last coordinate.
\Cref{fig:hypercube} illustrates this structure for $Q_3$.

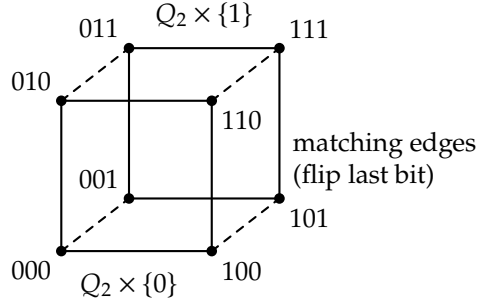
\begin{figure}[t]
    \centering
    \begin{tikzpicture}[
        scale=1,
        line cap=round,
        line join=round,
        every node/.style={font=\small}
    ]
        \coordinate (A0) at (0,0);
        \coordinate (B0) at (2,0);
        \coordinate (C0) at (2,2);
        \coordinate (D0) at (0,2);
        \coordinate (A1) at (0.9,0.7);
        \coordinate (B1) at (2.9,0.7);
        \coordinate (C1) at (2.9,2.7);
        \coordinate (D1) at (0.9,2.7);
        \draw[thick] (A0)--(B0)--(C0)--(D0)--cycle;
        \draw[thick] (A1)--(B1)--(C1)--(D1)--cycle;
        \draw[dashed, thick] (A0)--(A1);
        \draw[dashed, thick] (B0)--(B1);
        \draw[dashed, thick] (C0)--(C1);
        \draw[dashed, thick] (D0)--(D1);
        \foreach \p in {A0,B0,C0,D0,A1,B1,C1,D1}{
            \fill (\p) circle (2pt);
        }
        \node[below left]  at (A0) {$000$};
        \node[below right] at (B0) {$100$};
        \node[below right] at (C0) {$110$};
        \node[above left]  at (D0) {$010$};
        \node[above left]  at (A1) {$001$};
        \node[below right] at (B1) {$101$};
        \node[above right] at (C1) {$111$};
        \node[above left]  at (D1) {$011$};
        \node at (0.9,-0.45) {$Q_2 \times \{0\}$};
        \node at (1.9,3.15) {$Q_2 \times \{1\}$};
        \node[align=left] at (4.3,1.2) {matching edges\\(flip last bit)};
    \end{tikzpicture}
    \caption{%
        The three-dimensional hypercube $Q_3 \cong Q_2 \cartprod K_2$, visualized as two copies of $Q_2$ with a perfect matching (dashed edges) connecting corresponding vertices that differ in the last bit coordinate.
    }
    \label{fig:hypercube}
\end{figure}
\vspace{0.2cm}
\noindent
The following lemma characterizes balanced partitions of the hypercube satisfying the path-counting symmetry \cref{eq:count-condition} which are precisely those obtained by flipping a single coordinate.%
\hypercubeflip*

To prove \cref{lem:hypercube-flip}, we need the following lemma which establishes that the two parts must have adjacent vertices.

\begin{lemma}\label{lem:min-dist}
    Let $A, B \subseteq V(Q_n)$ satisfy the hypotheses of \cref{lem:hypercube-flip}.
    Then
    \begin{equation}
        \min\{d(x, y) : x \in A, \, y \in B\} = 1.
    \end{equation}
\end{lemma}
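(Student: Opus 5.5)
We argue by contradiction: suppose $\delta \coloneqq \min\{d(x,y) : x \in A,\, y \in B\} \geq 2$, and fix a pair $x \in A$, $y \in B$ attaining $d(x,y) = \delta$. Hypothesis~(ii) gives $A \neq \varnothing$ and $B \neq \varnothing$, so such a pair exists. We then apply the counting condition~\ref{item:counting} with the smallest possible value of $L$, namely $L = \delta$. For any vertex $w$ the triangle inequality gives $d(x,w) + d(w,y) \geq \delta$. Equality holds exactly when $w$ lies on a geodesic between $x$ and $y$. In the hypercube this means $w$ agrees with $x$ and $y$ on every coordinate where they agree, and on each of the $\delta$ coordinates where they differ it takes either value freely. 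So the set of such $w$ is the subcube (interval) $I(x,y)$, with $2^\delta$ elements.

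The key step is to pin down how $I(x,y)$ splits between $A$ and $B$, using minimality of $\delta$. Take any $u \in I(x,y)$ with $u \neq x$, $u \neq y$. If $u \in B$, then $d(x,u) < \delta$ with $x \in A$, contradicting minimality; so $u \in A$. But then $u \in A$ and $y \in B$ with $d(u,y) < \delta$, again a contradiction. Hence when $\delta \geq 2$ the interval has an interior vertex, and no interior vertex can lie in either part. This is already impossible, since $A \cup B = V(Q_n)$.

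This makes the counting condition almost unnecessary, so I would double-check that the argument really is this direct. Taking $u$ to be a neighbour of $x$ on a geodesic toward $y$, we have $d(x,u) = 1 < \delta$, which forces $u \in A$, and $d(u,y) = \delta - 1 < \delta$, which forces $u \in B$. This contradicts disjointness, so $\delta \leq 1$; and $\delta \geq 1$ because $A \cap B = \varnothing$.

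As a fallback consistent with the paper's intent, the same conclusion follows from~\ref{item:counting} at $L = \delta$. Among the $2^\delta$ geodesic vertices, $x$ is the only one in $A$ within distance $<\delta$ of $y$, and $y$ is the only one in $B$ within distance $<\delta$ of $x$, so both counts are forced to equal $1$. That means $2^\delta = 2$, i.e.\ $\delta = 1$. I expect no real obstacle beyond stating the geodesic characterization of intervals in $Q_n$ cleanly.
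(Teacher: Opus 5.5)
Your argument is correct and is essentially the paper's proof. The first vertex $u$ after $x$ on a shortest $x$--$y$ path has $d(x,u)=1<\delta$ and $d(u,y)=\delta-1<\delta$, so it can lie in neither $A$ nor $B$, and like the paper you need nothing beyond the partition property, not the counting condition~\ref{item:counting}. Your ``fallback'' via $L=\delta$ uses no new information, since the counts $1=1$ follow from the same minimality argument, so you can drop it.
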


\begin{proof}
    Let $k = \min\{d(x, y) : x \in A, \, y \in B\}$.
    Assume for contradiction that $k > 1$.
    Choose $x \in A$ and $y \in B$ with $d(x, y) = k$, and let
    $x$--$x_1$--$\,\cdots$--$x_{k-1}$--$y$ be a shortest path from $x$ to $y$ in $Q_n$.

    Then, $d(x, x_1) = 1$ and $d(x_1, y) = k - 1 < k$.
    Since $A$ and $B$ form a partition, for any $x_1 \in V(Q_n)$ we either have:
    \begin{itemize}
        \item $x_1 \in A$, meaning $d(x_1, y) < k$ with $x_1 \in A$ and $y \in B$, contradicting minimality of $k$, or
        \item $x_1 \in B$, meaning $d(x, x_1) = 1 < k$ with $x \in A$ and $x_1 \in B$, again contradicting minimality.
    \end{itemize}
    Hence, $k = 1$.
\end{proof}

\smallskip

\begin{proof}[Proof of \cref{lem:hypercube-flip}]
    The case when $n = 1$ is trivial.
    For illustrative purposes, we additionally show the case when $n = 2$.

    The vertex set of the $2$-dimensional hypercube is $V(Q_2) = \{00, 01, 10, 11\}$.
    By \cref{lem:min-dist}, there exist vertices $x \in A$ and $y \in B$ such that $d(x, y) = 1$.
    Without loss of generality, choose $x = 00$ and $y = 01$.
    Since $\abs{A} = \abs{B} = 2$, the only possible balanced partitions with $00 \in A$ and $01 \in B$ are
    \begin{equation}
        A = \{00, 10\}, \; B = \{01, 11\} \qquad \text{or} \qquad A = \{00, 11\}, \; B = \{01, 10\}.
    \end{equation}
    In both cases, $B$ is obtained from $A$ by flipping the second bit.
    Thus, the claim holds for $n = 2$.

    \medskip

    We next prove the statement for an arbitrary positive integer $n$.
    By \cref{lem:min-dist}, choose vertices $x \in A$ and $y \in B$ such that $d(x, y) = 1$.
    Let $i \in [n]$ be the unique coordinate where $x$ and $y$ differ, so $y = x^{(i)}$.
    We now show that coordinate $i$ implements the flip globally.

    \medskip
    \noindent Take any coordinate $k \neq i$.
    The four vertices $x, \; x^{(i)}, \; x^{(i,k)} = (x^{(i)})^{(k)}, \; x^{(k)}$ form a 4-cycle in $Q_n$:
    \begin{equation}
        x \;-\; x^{(i)} \;-\; x^{(i,k)} \;-\; x^{(k)} \;-\; x.
    \end{equation}
    We now consider two cases based on whether $x^{(i,k)}$ belongs to $A$ or $B$.

    \medskip
    \noindent \textit{Case 1: $x^{(i,k)} \in B$.} Apply \cref{eq:count-condition} at $L = 2$ to the antipodal pair $\bigl(x, \, x^{(i,k)}\bigr)$.
    The only two length-2 paths between them pass through $x^{(i)}$ and $x^{(k)}$ respectively.
    Since $x^{(i,k)} \in B$, balancing the count forces $x^{(k)} \in A$.
    Thus, we have
    \begin{equation}
        x \in A, \quad x^{(i)} \in B, \quad x^{(i,k)} \in B, \quad x^{(k)} \in A.
    \end{equation}
    The pair $\bigl(x^{(i,k)}, x^{(k)}\bigr)$ differs only in coordinate $i$, just as $(x, x^{(i)})$ does.

    \medskip
    \noindent \textit{Case 2: $x^{(i,k)} \in A$.}
    If $x^{(k)} \in A$, then it belongs to a different partition than $x^{(i)}$, and we may apply \cref{eq:count-condition} with $L = 2$ to the antipodal pair $\bigl(x^{(i)}, x^{(k)}\bigr)$.
    The only two length-2 paths between them pass through $x$ and $x^{(i,k)}$, both in $A$, giving two $A$-intermediates and zero $B$-intermediates, which leads to a contradiction.
    Hence, we must have $x^{(k)} \in B$, meaning
    \begin{equation}
        x \in A, \quad x^{(i)} \in B, \quad x^{(i,k)} \in A, \quad x^{(k)} \in B.
    \end{equation}
    Again, the pair $\bigl(x^{(i,k)}, x^{(k)}\bigr)$ differs only in coordinate $i$, just as $(x, x^{(i)})$ does.

    \medskip
    \noindent We have established the following {propagation property}: if $x \in A$ and $x^{(i)} \in B$, then for every coordinate $k \neq i$, the vertices $x^{(k)}$ and $x^{(i,k)}$ belong to distinct partitions.
    In other words, flipping any non-$i$ coordinate sends a split pair to a split pair, preserving the same flip coordinate $i$.

    Since every vertex pair $z, z^{(i)} \in V(Q_n)$ is reachable from the pair $x, y$ by a finite sequence of non-$i$ coordinate flips, we conclude that
    \begin{equation}
        \forall x \in V(Q_n), \qquad x \in A \iff x^{(i)} \in B,
    \end{equation}
    ending the proof.
\end{proof}

\section{Greedy strategy performance for non-affine functions}\label{app:non-affine-examples}
In this section, we examine two important non-affine functions: the $\AND$ function and the majority function.
For these functions, the greedy strategy is suboptimal, but understanding the gap between greedy and global performance provides insight into when adaptive or joint measurements offer practical advantages.

\subsection{The \texorpdfstring{$\boldsymbol{\AND}$}{AND} function}

The $\AND$ function on $n$ bits is defined by
\begin{equation}
    \AND(x_1,\ldots,x_n) = x_1 \wedge x_2 \wedge \cdots \wedge x_n = \begin{cases}
        1 & \text{if } x_i = 1 \text{ for all } i, \\
        0 & \text{otherwise}.
    \end{cases}
\end{equation}
This function is highly unbalanced: $S_1 = \{11\ldots1\}$ contains only the all-ones string, while $S_0$ contains all other $2^n - 1$ strings.

\subsubsection{Greedy success probability}

For the greedy strategy, we measure each qubit independently using the optimal single-qubit measurement, which succeeds with probability $p = \frac{1}{2} + \frac{1}{4} \norm[\Big]{ \ketbra{\psi_0} - \ketbra{\psi_1} }_1 = \frac{1}{2}(1+\sqrt{1-s^2})$ in distinguishing $\ket{\psi_0}$ from $\ket{\psi_1}$ with equal priors.
The greedy strategy fails in evaluating $\AND$ when
\begin{itemize}
    \item $x = 11\ldots1$ but we guess $y \neq 11\ldots1$, which happens with probability $(1-p^n)/2^n$.
    \item $x \neq 11\ldots1$ but we guess $y = 11\ldots1$, which happens with probability
\end{itemize}
\begin{equation}
    \sum_{x \neq 11\ldots1} \frac{1}{2^n} p^{w(x)} (1-p)^{n-w(x)}
    = \frac{1}{2^n} \sum_{k=0}^{n-1} \binom{n}{k} p^k (1-p)^{n-k}
    = \frac{1}{2^n} (1 - p^n),
\end{equation}
where $w(x)$ is the Hamming weight of $x$.
The greedy success probability is therefore
\begin{equation}
    \P(n,\AND,s,\greedy) = 1-\frac{1-p^n}{2^{n-1}}.
\end{equation}

\begin{figure}[t]
    \centering
    \includegraphics[width=0.8\linewidth]{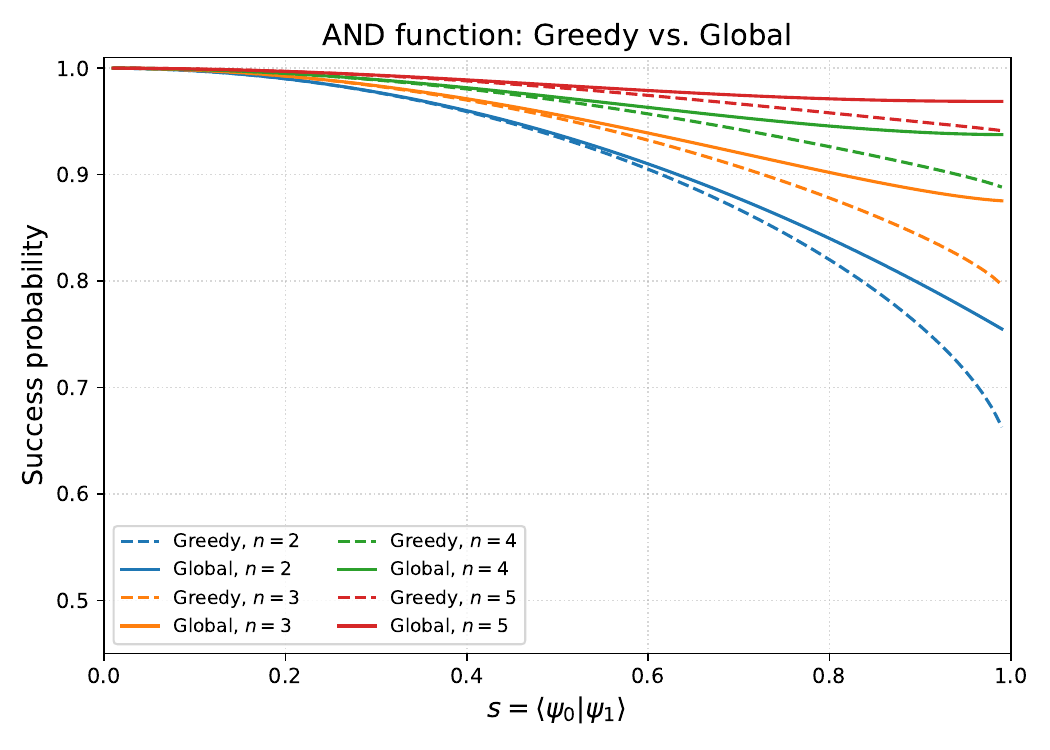}
    \caption{%
        Greedy vs.~global success probabilities for the $\AND$ function.
        As $n$ increases, both strategies approach perfect success exponentially fast, with the gap diminishing rapidly.}
    \label{fig:and-comparison}
\end{figure}

\subsubsection{Asymptotic behavior}

As $n \to \infty$ with $p$ fixed (corresponding to fixed non-orthogonal states $\ket{\psi_0}$ and $\ket{\psi_1}$), we have $\P(n,\AND,s,\greedy) \to 1$ since $1/2^{n-1} \to 0$ exponentially fast.

This limiting behavior is intuitive: with overwhelming probability $(2^n-1)/2^n \to 1$, the input contains at least one zero.
The probability of accidentally guessing all ones when the input is not all ones vanishes exponentially.
Thus, the greedy strategy succeeds with probability approaching $1$.
The global optimal strategy achieves slightly better performance by leveraging correlations across qubits, but this advantage becomes negligible as $n$ increases.
For large $n$, the extreme imbalance of the $\AND$ function dominates, and both strategies approach perfect success.

\subsection{The majority function}

For odd $n = 2k+1$, the majority function is defined by
\begin{equation}
    \MAJ(x_1,\ldots,x_{2k+1}) = \begin{cases}
        1 & \text{if } \sum_{i=1}^{2k+1} x_i > k, \\
        0 & \text{if } \sum_{i=1}^{2k+1} x_i \leq k.
    \end{cases}
\end{equation}
Unlike the $\AND$ function, majority is balanced: $\abs{S_0} = \abs{S_1} = 2^{2k}$.

\subsubsection{Greedy success probability}

We now compute the greedy success probability for the majority function.
The greedy strategy measures each qubit independently with the optimal single-qubit measurement, which correctly identifies the bit value with probability $p$ and incorrectly identifies it with probability $1-p$.

Given an input string $x$ with Hamming weight $w(x) = j$ (i.e., $j$ ones and $2k+1-j$ zeros), the measurement produces an output string $y$ whose Hamming weight $w(y)$ is a random variable.
We can decompose $w(y)$ as
\begin{equation}
    w(y) = Y_1 + Y_2,
\end{equation}
where
\begin{itemize}
    \item $Y_1$ is the number of ones in $y$ that came from ones in $x$ (correctly identified ones), with $Y_1 \sim \Binom(j, p)$;
    \item $Y_2$ is the number of ones in $y$ that came from zeros in $x$ (incorrectly flipped zeros), with $Y_2 \sim \Binom(2k+1-j, 1-p)$.
\end{itemize}

Since different qubits are measured independently, $Y_1$ and $Y_2$ are independent random variables.

The greedy strategy succeeds when $\MAJ(x) = \MAJ(y)$, which occurs when:
\begin{itemize}
    \item If $j \leq k$ (so $f(x) = 0$): we need $w(y) \leq k$, i.e., $Y_1 + Y_2 \leq k$.
    \item If $j > k$ (so $f(x) = 1$): we need $w(y) > k$, i.e., $Y_1 + Y_2 > k$.
\end{itemize}

Let $\gamma_j$ denote the conditional success probability given that $w(x) = j$.
For $j \leq k$:
\begin{equation}
    \gamma_j
    = \P(Y_1 + Y_2 \leq k)
    = \sum_{a=0}^{j} \sum_{b=0}^{\min(2k+1-j, k-a)} \binom{j}{a} p^a (1-p)^{j-a} \binom{2k+1-j}{b} (1-p)^b p^{2k+1-j-b}.
\end{equation}

For $j > k$ (equivalently, $j \geq k+1$):
\begin{equation}
    \gamma_j
    = \P(Y_1 + Y_2 > k)
    = \sum_{a=0}^{j} \sum_{b=\max(0,k+1-a)}^{2k+1-j} \binom{j}{a} p^a (1-p)^{j-a} \binom{2k+1-j}{b} (1-p)^b p^{2k+1-j-b}.
\end{equation}

By symmetry of the majority function, we have $\gamma_j = \gamma_{2k+1-j}$ for all $j$.
To see this, note that if we flip all bits in both $x$ and $y$, the majority value also flips, and the distribution of measurement outcomes remains the same (with the roles of 0 and 1 interchanged).
Therefore, the conditional success probability for $j$ ones is the same as for $j$ zeros (i.e., $2k+1-j$ ones).

The greedy success probability is:
\begin{equation}
    \P(n=2k+1,\MAJ,s,\greedy)
    = \sum_{j=0}^{2k+1} \frac{1}{2^{2k+1}} \binom{2k+1}{j} \gamma_j
    = \sum_{j=0}^{k} \frac{1}{2^{2k}} \binom{2k+1}{j} \gamma_j,
\end{equation}
where we used the symmetry $\gamma_j = \gamma_{2k+1-j}$ to combine terms.

\begin{figure}[t]
    \centering
    \includegraphics[width=0.8\linewidth]{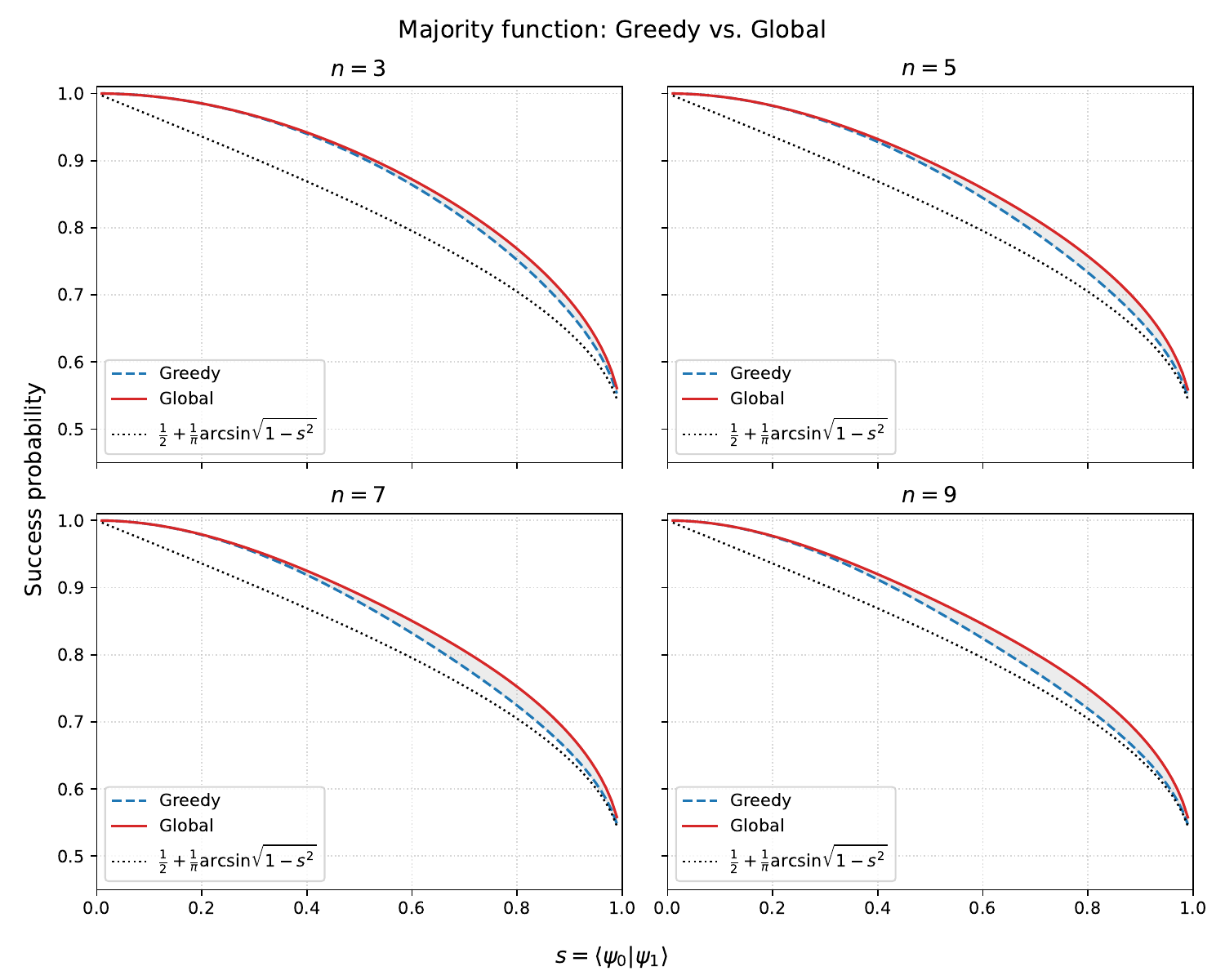}
    \caption{%
        Greedy vs.~global success probabilities for the majority function.
    }
    \label{fig:maj-comparison}
\end{figure}

\subsubsection{Asymptotic behavior}

The asymptotic behavior of the majority function is governed by its noise sensitivity.
Unlike the $\AND$ function, the failure probability for majority does not vanish as $n \to \infty$.

To analyze this, we relate our setting to the noise sensitivity framework.
When we measure each qubit with the optimal single-qubit measurement, we correctly identify each bit with probability $p=\frac{1}{2}(1+\sqrt{1-s^2})$ and incorrectly with probability $\epsilon \coloneqq 1-p$.
This is equivalent to applying independent random noise to each bit with noise rate $\epsilon$.

The noise sensitivity of the majority function has been precisely characterized \cite{mossel2003noise}.
For the majority function on $n = 2k+1$ bits with noise parameter $\epsilon$, the probability that the function value changes (i.e., the failure probability) is given by
\begin{equation}
    \P_{\text{fail}}^\MAJ = \frac{1}{2}- \frac{1}{\pi}\arcsin({1-2\epsilon}) + O\left(\frac{1}{\sqrt{n\epsilon}}\right).
\end{equation}
In the limit as $n \to \infty$ with small $\epsilon$, this approaches
\begin{equation}
    \P_{\text{fail}}^\MAJ \to \frac{2}{\pi}\sqrt{\epsilon} = \frac{2}{\pi}\sqrt{1-p}.
\end{equation}
Therefore, the local success probability approaches
\begin{equation}
    \P(n,\MAJ,s,\greedy) \to\frac{1}{2}+ \frac{1}{\pi}\arcsin({\sqrt{1-s^2}})
\end{equation}
as $n \to \infty$.
This limiting value is strictly less than $1$ for any $p < 1$ (i.e., for non-orthogonal states).

\subsection{Comparison and discussion}

The contrasting behaviors of the $\AND$ and majority functions illustrate two distinct regimes:

\paragraph{Highly unbalanced functions (\texorpdfstring{$\boldsymbol{\AND}$}{AND}):} When $\abs{S_0} \gg \abs{S_1}$ or vice versa, the greedy strategy performs nearly optimally for large $n$ because the overwhelming size of one pre-image dominates the success probability.
Both greedy and global strategies converge to perfect success exponentially fast as $n \to \infty$, and the advantage from joint measurements becomes negligible in this regime.

\paragraph{Balanced non-affine functions (majority):} For balanced functions that are not affine, the situation is fundamentally different.
The majority function's noise sensitivity ensures that the greedy strategy converges to a limiting success probability strictly bounded away from $1$.
Specifically, $\P(n,\MAJ,s,\greedy) \to 1 - \frac{2}{\pi}\arcsin(\sqrt{1-p})$ as $n \to \infty$, which remains bounded below $1$ for any non-orthogonal states ($p < 1$), while the global optimal strategy achieves better performance.
This gap between greedy and global strategies persists even asymptotically, demonstrating a persistent advantage of joint measurements that does not vanish as the problem size increases.

These examples confirm that our characterization theorem captures a fundamental dichotomy: affine functions are the unique class where local memoryless measurements are always optimal, while all other functions require more complex strategies.
Crucially, the practical significance of this advantage is intimately tied to the function's noise sensitivity.
Low noise sensitivity functions (like $\AND$ for large $n$) exhibit vanishing advantage from joint measurements, while high noise sensitivity functions (like majority) maintain advantage from joint-measurements even for arbitrarily large input dimensions.
This connection between noise sensitivity and joint-measurement-advantage suggests deeper structural relationships between classical Boolean function complexity and quantum information-theoretic resources.

\end{document}